\newtheorem{fact}{Fact}
\newtheorem{theorem}{Theorem}
\newtheorem{corollary}{Corollary}
\newtheorem{definition}{Definition}
\newtheorem{remark}{Remark}
\newenvironment{proof}{{\noindent\it Proof}\quad}{\hfill $\square$\par}
\newtheorem{example}{Example}
\newcommand{\F}{\ensuremath{\mathbb F}}
\newcommand{\Z}{\ensuremath{\mathbb Z}}
\newcommand{\ls}[1]
    {\dimen0=\fontdimen6\the\font\lineskip=#1\dimen0
     \advance\lineskip.5\fontdimen5\the\font
     \advance\lineskip-\dimen0
     \lineskiplimit=0.9\lineskip
     \baselineskip=\lineskip
     \advance\baselineskip\dimen0
     \normallineskip\lineskip\normallineskiplimit\lineskiplimit
     \normalbaselineskip\baselineskip
     \ignorespaces}
\begin{document}

\bibliographystyle{abbrv}

\title{Constructions of complementary sequence sets and complete complementary codes by 2-level autocorrelation sequences and permutation polynomials}
\author{Zilong Wang$^1$ and Guang Gong$^2$, IEEE Fellow\\
	\small $^1$ State Key Laboratory of Integrated Service Networks, Xidian University \\[-0.8ex]
	\small Xi'an, 710071, China\\
	\small $^2$Department of Electrical and Computer Engineering, University of Waterloo \\
	\small Waterloo, Ontario N2L 3G1, Canada  \\
	\small\tt zlwang@xidian.edu.cn, ggong@uwaterloo.ca\\
}
	\maketitle

\thispagestyle{plain} \setcounter{page}{1}

\begin{abstract}
In this paper, a recent method to construct complementary sequence sets and complete complementary codes by Hadamard matrices is deeply studied. By taking the algebraic structure of Hadamard matrices into consideration, our main result determine the so-called $\delta$-linear terms and  $\delta$-quadratic terms.  As a first consequence, a powerful theory linking Golay complementary sets of $p$-ary ($p$ prime) sequences and the generalized Reed-Muller codes by Kasami et al. is developed.  These codes enjoy good error-correcting capability, tightly controlled PMEPR, and significantly extend the range of coding options for applications of OFDM using $p^n$ subcarriers. As another consequence, we make a previously unrecognized connection between the sequences in CSSs and CCCs and the sequence with 2-level autocorrelation, trace function and permutation polynomial (PP) over the finite fields, which provides an answer to the open problem proposed by Paterson and Tarokh in 2000.
\end{abstract}

{\bf Index Terms }Complementary sequence set, Complete complementary codes, Permutation polynomial, Generalized Reed-Muller codes, 2-level auto-correlation.

\ls{1.5}
\section{Introduction}

The concept of {\em Golay sequence pair} (GSP) was first  introduced by Golay \cite{Golay51, Golay61}, and it was  extended later  to {\em complementary sequence set} (CSS) for binary case \cite{Tseng72} and polyphase case
\cite{Sivaswamy78}, where the aperiodic autocorrelations
of all the sequences in a CSS are summed to zero except
at zero shift. The concept of {\em complete mutually orthogonal complementary set} (CMOCS) or {\em complete complementary code} (CCC) was  proposed in \cite{Suehiro88}, which can be regarded as a collection
of CSSs with the additional aperiodic cross-correlation
property. CSSs  and CCCs  have been
applied in diverse areas of digital communications, including channel measurement, synchronisation, spread spectrum communications
and power control for multi-carrier wireless transmission. In addition, CCCs are source to design zero correlation zone (ZCZ) sequences, which have been shown in \cite{ZCZ13,ZCZ06,Tang2010}.

Among these applications, orthogonal frequency division multiplexing (OFDM) has recently seen rising popularity in international standards including coming 5G cellular systems, and one attractive method for power control in OFDM \cite{Litsynbook} is coding across the subcarriers and selecting codewords with lower peak-to-mean envelope power ratio (PMEPR). These codewords can be a linear code or codewords drawn from cosets of a linear code \cite{Tarokh00,PT2000}.
It has been shown in \cite{Popovic91} that the use of sequences in CSSs as codewords results in OFDM signals
with low PMEPR.

An effective way of combining the coding approach  and the use of Golay sequences was established by Davis and Jedwab \cite{DavisJedwab99} by showed that
the  Golay  sequences can be obtained from specific second-order cosets of the  first-order generalized Reed-Muller (GRM) code \cite{MacWilliams}.
Form then on, a large volume of research works  on the constructions of CSSs  \cite{Paterson00,Paterson02,Schmidt06,Schmidt07,Chen06, Stinchcombe, Wu2016} and CCCs \cite{Chen08,Liu14,CCC} have been done along this line that the sequences are all described by the generalized Boolean functions (GBFs) and fall into cosets of the GRM code.

In addition to the above direct constructions of CSSs and CCCs based on GBFs, there
exists a recursive approach to construct CSSs and CCCs based on  para-unitary (PU) matrices, such as \cite{Tseng72,Marziani,Suehiro88, SETA2016,BudSpas,BudIT,TSP2018}.
Recently, we  proposed a framework \cite{CCA} on the construction of PU matrices by Hadamard matrices,  which established a connection between the  aforementioned GBF-based constructions and PU-based constructions, and a great number of the new CSSs and CCCs are constructed. The key theoretical result of this method is to  extract functions from the so called {\em generalized seed matrices}, from which the $q$-ary  sequences of length $N^m$ in CSSs and CCCs of size $N$ can be represented by the {\em $\delta$-linear terms} and {\em $\delta$-quadratic terms}, where the word {\em linear} and {\em quadratic} are with respect to the  Kronecker-delta functions from $\Z_N$ to $
\Z_q$. We have shown some realized constructions by such a method in \cite{CCA} by Bruce-force computation, which is very heavy even for $q=N=4$.

In this paper, by taking the algebraic structure of the {\em Butson-type}  Hadamard (BH) matrix into consideration, the structure of the $\delta$-linear terms and  $\delta$-quadratic terms is explicit given, which avoid the heavy computation from the basis of  Kronecker-delta functions. In particular, we make an previously unrecognized connection between the sequences in CSSs and CCCs and the sequence with 2-level autocorrelation, trace function and permutation polynomial (PP) over the finite fields.

By taking the Discrete Fourier transform (DFT) matrix of order $N$ as a BH matrix, we obtained that the $\delta$-quadratic terms can be represented by the product of any two permutation functions over $\Z_N$. For $N=2^n$ being a power of $2$, such  permutation functions can be realized by the bijective GBFs. For $N=p$ prime, such  permutation functions can be realized by PPs over finite field $\F_p$.  A powerful theory linking $p$-ary sequence of length $p^m$ in CSSs of size $p$ and the GRM codes proposed by  Kasami,  Lin  and Peterson \cite{Kasami68} is developed. The reader should note that there are two types of  nonbinary generalizations of the classical RM codes. The first type directly generalize alphabets of classical RM codes from binary to $q$-ary, which have been made a connection with sequences in CSSs of size $2^n$, such as \cite{DavisJedwab99, Paterson00}. Here the GRM code is the another type of generalization proposed by Kasami et al. in 1968. For $p=2$, the constructed sequences agree with the binary Golay sequences \cite{DavisJedwab99}. For $p$ odd prime, the $\delta$-linear terms, which contains $p^{m(p-1)+1}$ elements, can be treated as a linear subcode of the $(p-1)$th order second type  GRM codes, and we explicitly identify $\frac{1}{2}m!\left((p-1)!\right)^{m-1}\left((p-2)!\right)^{m-1}$ cosets within this specified linear subcode from the $\delta$-quadratic terms. Moreover, their error correction capability is given by showing that Hamming distance of any two distinct sequence is  at least $3p^{m-2}$.

By taking the Hadamard matrix of order $p^n$  over the finite field $\F_{p^n}$ as a BH matrix, we obtained that the $\delta$-quadratic terms can be represented by the trace function of  product of any two PPs over $\F_{p^n}$. Note that there have been numerous books and papers on the study of PPs over finite fields. A wealth of results covering different periods in the development of this active area can be found in \cite{Lausch, Lidl, Mullen, Hou}. Nevertheless, this is the first time to use PPs over the finite field to construct CSSs and CCCs.

Every $p$-ary ($p$ prime) sequence with {\em 2-level auto correlation}  of period $p^n-1$ determine a BH matrix of order $p^n$, where each entry can be represented by the trace representation of the sequence. Then the $\delta$-quadratic terms can be represented by the trace representation of the 2-level auto correlation sequence with the product of any two PPs over $\F_{p^n}$ as variable.  Note that there is a large volume research on the construction of the sequences with 2-level autocorrelation, which correspond to the cyclic Hadamard difference sets,  such as $m$ sequences, GMW sequences, WG sequences and so on. A collection of the results in this area can be found in \cite{Gong-book}.
We showed that $m$ sequences yield the same result as Hadamard matrix over the finite field $\F_{p^n}$, and other 2-level autocorrelation sequences produce new CSSs and CCCs.
Nevertheless, this is the first time to use the trace representation of any sequence with 2-level auto correlation to construct CSSs and CCCs.

The coset representatives of sequences proposed in this paper are presented by the trace function, so the results in this paper provide an answer to 
the open problem proposed by K.G. Paterson and V. Tarokh \cite{PT2000} in 2000:
{\em It may be possible to obtain significant reductions in PMEPR by using such offsets, and we leave the analytical determination of good offsets for
trace codes as a difficult open problem.}

The rest of our paper is organized as follows. In the next section, we introduce the definition and notations, revisit the framework  on the construction of PU matrices by Hadamard matrices in \cite{CCA}. In Section 3,  we study on the $\delta$-linear terms and $\delta$-quadratic terms. In Section 4, we take DFT matrices as BH matrices in $\delta$-quadratic terms, and link $p$-ary sequence of length $p^m$ in CSSs of size $p$ with the GRM codes proposed in \cite{Kasami68}.
In Section 5, we take Hadamard matrices over the finite filed as BH matrices in $\delta$-quadratic terms, and construct CSSs and CCCs by PPs and trace function over finite fields. Section 6 make a connection of the constructions of CSSs and CCCs with the sequences with 2-level autocorrelation. We conclude the paper in Section 7.

\section{Preliminaries}

In this section, we introduce some basic definitions and notations of CSSs, CCCs, and two different type generalizations of the classical   Reed-Muller codes. A  framework on the construction of CSSs and CCCs by Hadamard matrices \cite{CCA} are also revisited.

\subsection{Sequences, CSS and CCC}

Let $\Z_p$ be a residue class ring modulo $p$. If $p$ is a prime, we use the finite filed $\F_p$ instead of the ring $\Z_p$.
A $q$-ary sequence $\bm{f}$ of length $L$ is defined as
$$\bm{f}=(f(0), f(1),\cdots, f(L-1))$$
for each entry $f(t)\in \mathbb{Z}_q$ ($t\in \Z_L$).

\begin{definition}\label{correlation-def}
For two $q$-ary sequences $\bm{f}_1$ and $\bm{f}_2$ of length $L$, the {\em aperiodic cross-correlation} of $\bm{f}_1$ and $\bm{f}_2$ at shift $\tau$ ($-L<\tau< L$) is defined by
$$C_{\bm{f}_1,\bm{f}_2}(\tau)=
\left\{
\begin{aligned}
&\sum_{t=0}^{L-1-\tau}{\omega^{\bm{f}_1(t+\tau)-\bm{f}_2(t)}},\
0\leq \tau<L,\\
&\sum_{t=0}^{L-1+\tau}{\omega^{\bm{f}_1(t)-\bm{f}_2(t-\tau)}},\
-L<\tau<0,
\end{aligned} \right.
$$
where $\omega$ is a $q$th root of unity.
If $\bm{f}_1=\bm{f}_2=\bm{f}$, the {\em aperiodic autocorrelation} of  sequence $\bm{f}$ at shift $\tau$ is denoted by
$$C_{\bm{f}}(\tau)=C_{\bm{f},\bm{f}}(\tau).$$
\end{definition}

\begin{definition}\label{CSS-def}
A set of sequences $S=\{\bm{f}_0, \bm{f}_1,\cdots, \bm{f}_{N-1}\}$ is called  a {\em complementary sequence set} (CSS) of size $N$ if
\begin{equation}
\sum_{u=0}^{N-1}C_{\bm{f}_u}(\tau)=0\  \mbox{for}\  \tau\neq 0.
\end{equation}
\end{definition}
If the set size $N=2$, such a set is called a {\em Golay sequence pair} (GSP). Each sequence in GSP is called a {\em Golay sequence}.

Two CSSs $S_1=\{\bm{f}_{1,0}, \bm{f}_{1,1},\cdots, \bm{f}_{1,N-1}\}$ and $S_2=\{\bm{f}_{2,0}, \bm{f}_{2,1},\cdots, \bm{f}_{2,N-1}\}$ are said to be {\em mutually orthogonal}
if
\begin{equation}\label{CCC-def}
\sum_{v=0}^{N-1}C_{\bm{f}_{1,v},\bm{f}_{2,v}}(\tau)=0 \ \mbox{for}\ \forall \tau.
\end{equation}
It is known that the number of CSSs which are
pairwise mutually orthogonal is at most equal to $N$, the number of sequences
in a CSS.

\begin{definition}
Let $S_u=\{\bm{f}_{u,0}, \bm{f}_{u,1},\cdots, \bm{f}_{u,N-1}\}$  be CSSs of size $N$ for $0\leq u<N$, which are pairwise mutually orthogonal. Such a collection of $S_u$ is called {\em complete mutually orthogonal complementary sets} (CMOCS) or {\em complete complementary codes} (CCC).
\end{definition}

The concept of CCC is better to view through a matrix whose $u$th row is given by $S_u$, i.e.,
\begin{equation} \label{matrix-seq}
\begin{bmatrix}
  \bm{f}_{0,0} & \bm{f}_{0,1} & \cdots & \bm{f}_{0,N-1} \\
  \bm{f}_{1,0} & \bm{f}_{1,1} & \cdots & \bm{f}_{1,N-1} \\
  \vdots & \vdots & \ddots & \vdots \\
  \bm{f}_{N-1,0} & \bm{f}_{N-1,1} & \cdots & \bm{f}_{N-1,N-1} \\
\end{bmatrix}.
\end{equation}

\subsection{CCA, CAS and PU Matrix}

The concept of GSP was generalized to {\em Golay array pair} (GAP) in \cite{Luke}. Moreover, a powerful three-stage process was presented in \cite{Array2}
by showing that all known standard \cite{DavisJedwab99} and non-standard \cite{Fiedler06,Fiedler2,Li05} Golay sequences  of length $2^m$   can be derived from the seed GAPs. Inspired by the excellent idea  for array pair  \cite{Array2, Parker11}, the concepts of CSS and CCC were generalized from sequence to array in \cite{CCA}. Here we introduce them from the  viewpoint of the generating function.

An $m$-dimensional $q$-ary array of size $\underbrace{p\times p \times \cdots \times p}_m$ can be represented by a {\em corresponding function} from $\Z_p^m$ to $\Z_q$:
$$f(\bm{x})=f(x_0, x_1, \cdots, x_{m-1}): \Z_p^m \rightarrow \Z_q,$$
where $\bm{x}=(x_0, x_1, \cdots, x_{m-1})$ and $x_k\in \Z_{p}$.
Let $\bm{z}=(z_0, z_1, \cdots, z_{m-1})$ and  $\bm{z}^{-1}=(z_0^{-1},z_1^{-1},\cdots, z_{m-1}^{-1})$.  The generating function of  an array $f(\bm{x})$ is defined by
\begin{equation}\label{array-gene}
F(\bm{z})=\sum_{x_0=0}^{p-1}\sum_{x_1=0}^{p-1}\cdots \sum_{x_{m-1}=0}^{p-1}\omega^{f(\bm{x})}z_0^{x_0}z_1^{x_1}\cdots z_{m-1}^{x_{m-1}}.
\end{equation}

\begin{definition}\label{CAS-def}
A set of arrays $\{f_0(\bm{x}), f_1(\bm{x}),\cdots, f_{N-1}(\bm{x})\}$ from $\Z_p^m$ to $\Z_q$ is called  a {\em complementary array set} (CAS) of size $N$ if their generating functions $\{F_0(\bm{z}), F_1(\bm{z}),\cdots, F_{N-1}(\bm{z})\}$ satisfy
\begin{equation}
\sum_{u=0}^{N-1}F_u(\bm{z})\cdot \overline{F}_u(\bm{z}^{-1})=N\cdot p^{m}.
\end{equation}
\end{definition}

Two CASs $S_1=\{f_{1,0}(\bm{x}), f_{1,1}(\bm{x}),\cdots, f_{1,N-1}(\bm{x})\}$ and $S_2=\{f_{2,0}(\bm{x}), f_{2,1}(\bm{x}),\cdots, f_{2,N-1}(\bm{x})\}$ are said to be {\em mutually orthogonal}
if  their generating functions  $\{F_{u,0}(\bm{z}), F_{u,1}(\bm{z}),\cdots, F_{u, N-1}(\bm{z})\}$ ($u=1,2$) satisfy
\begin{equation}
\sum_{v=0}^{N-1} F_{1,v}(\bm{z})\overline{F}_{2,v}(\bm{z}^{-1})=0.
\end{equation}

\begin{definition}\label{CCA-def}
Let $S_u=\{f_{u,0}(\bm{x}), f_{u,1}(\bm{x}),\cdots, f_{u,N-1}(\bm{x})\}$ ($0\leq u<N$) be CASs  of size $N$, which  are pairwise mutually orthogonal. We call such a collection of $S_u$ ($0\leq u<N$) a {\em complete mutually orthogonal array set} or a {\em complete complementary arrays} (CCA).
\end{definition}

Let $\widetilde{\bm{M}}(\bm{x})$ be a matrix where  the $u$th row is given by $S_u$, i.e.,
\begin{equation} \label{matrix-array}
\widetilde{\bm{M}}(\bm{x})=\begin{bmatrix}
  f_{0,0}(\bm{x}) & f_{0,1}(\bm{x}) & \cdots & f_{0,N-1}(\bm{x}) \\
  f_{1,0}(\bm{x}) & f_{1,1}(\bm{x}) & \cdots & f_{1,N-1}(\bm{x}) \\
  \vdots & \vdots & \ddots & \vdots \\
  f_{N-1,0}(\bm{x}) & f_{N-1,1}(\bm{x}) & \cdots & f_{N-1,N-1}(\bm{x}) \\
\end{bmatrix}.
\end{equation}
The generating functions of the entries in matrix $\widetilde{\bm{M}}({\bm{x}})$ can be  presented by  a matrix $\bm{M}({\bm{z}})$ with each entry given by ${{M}}_{u,v}(\bm{z})={F}_{u,v}(\bm{z})$, the generating function of ${f}_{u,v}(\bm{x})$, i.e.,
\begin{equation}\label{gene-matrix}
\bm{M}({\bm{z}})=	\begin{bmatrix}
F_{0,0}({\bm{z}})&F_{0,1}({\bm{z}})&\dots&F_{0,N-1}({\bm{z}})\\
F_{1,0}({\bm{z}})&F_{1,1}({\bm{z}})&\dots&F_{1,N-1}({\bm{z}})\\
\vdots             &\vdots             &\ddots&\vdots             \\
F_{N-1,0}({\bm{z}})&F_{N-1,1}({\bm{z}})&\dots&F_{N-1,N-1}({\bm{z}})\\
\end{bmatrix}.
\end{equation}
$\bm{M}({\bm{z}})$ is called the generating  matrix of $\widetilde{\bm{M}}({\bm{x}})$. If $\widetilde{\bm{M}}({\bm{x}})$ is a CCA, it is necessary that its generating matrix
$\bm{M}(\bm{z})$ is a  {\em multivariate para-unitary  (PU) matrix}, i.e., $\bm{M}(\bm{z})\cdot\bm{M}^{\dagger}(\bm{z}^{-1})=c\cdot \bm{I}_N$, where $(\cdot)^\dagger$ denotes the Hermitian transpose, $\bm{I}_N$ is the identity matrix of order $N$, and $c$ is a real constant.

For an array (function) $f(\bm{x})=f(x_0, x_1, \cdots, x_{m-1}): \Z_{p}^m \rightarrow \Z_q$, by ordering of the elements $\bm{x}$ in $\Z_{p}^m$, $f(\bm{x})$ can be  associated  with a sequence $\bm{f}$ of length $L=p^m$, where
$$f(t=\sum_{k=0}^{m-1}x_k\cdot p^k)=f(\bm{x}).$$
In this paper,  we say that  the sequence $f(t)$ is evaluated by the function $f(\bm{x})$. CSSs and CCCs can be constructed form a single CCA.
\begin{fact}\label{array-to-sequence} (\cite{CCA})
For a given CCA $\widetilde{\bm{M}}(\bm{x})$ in the matrix form (\ref{matrix-array}) and an arbitrary permutation $\pi$ of symbol $\{0, 1, \cdots, m-1\}$, let $\widetilde{\bm{M}}(\pi\cdot\bm{x})$ be a matrix with entry $\widetilde{\bm{M}}_{u, v}(\pi\cdot\bm{x})=f_{u,v}(\pi\cdot \bm{x})$, where $\pi\cdot\bm{x}=(x_{\pi(0)},x_{\pi(1)},\cdots, x_{\pi(m-1)})$. Then we have
\begin{itemize}
\item[(1)] Sequences evaluated by functions in $\widetilde{\bm{M}}(\pi\cdot\bm{x})$ form a CCC.
\item[(2)] Sequences evaluated by functions in each row (or column) of $\widetilde{\bm{M}}(\pi\cdot\bm{x})$ form a CSS.
\end{itemize}
\end{fact}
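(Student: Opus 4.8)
The plan is to reduce both parts to a single substitution principle that relates the multivariate generating functions of the arrays to the univariate generating functions of the serialised sequences, and then to transport the defining identities of a CAS and a CCA through that substitution. The starting point is the one-dimensional dictionary between correlations and generating functions. For a $q$-ary sequence $\bm f$ of length $L$, set $\hat F(z)=\sum_{t=0}^{L-1}\omega^{f(t)}z^t$; expanding $\hat F_1(z)\,\overline{\hat F}_2(z^{-1})$ and collecting powers of $z$ shows that the coefficient of $z^\tau$ is exactly $C_{\bm f_1,\bm f_2}(\tau)$ of Definition \ref{correlation-def}. Hence the CSS condition of Definition \ref{CSS-def} is equivalent to $\sum_u \hat F_u(z)\,\overline{\hat F}_u(z^{-1})$ being a constant (necessarily $NL=Np^m$), and mutual orthogonality of two families of sequences is equivalent to $\sum_v \hat F_{1,v}(z)\,\overline{\hat F}_{2,v}(z^{-1})=0$.

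Next I would establish the substitution principle. Writing $g(\bm x)=f(\pi\cdot\bm x)$ and serialising by $t=\sum_k x_k p^k$, the change of variable $\bm w=\pi\cdot\bm x$ inside (\ref{array-gene}) yields $\hat G(z)=F(\bm z)\big|_{z_k=z^{p^{\pi(k)}}}$: the univariate generating function of the permuted-and-serialised array is obtained from the multivariate generating function $F(\bm z)$ by the monomial substitution $z_k\mapsto z^{p^{\pi(k)}}$. Since this substitution acts only on exponents and leaves the coefficients $\omega^{f(\bm x)}$ untouched, it also sends $\overline F(\bm z^{-1})$ to $\overline{\hat G}(z^{-1})$, and therefore sends every product $F(\bm z)\,\overline F(\bm z^{-1})$ to $\hat G(z)\,\overline{\hat G}(z^{-1})$. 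Crucially, the substitution is a ring homomorphism of Laurent polynomials, so any polynomial identity among the $F_{u,v}(\bm z)$ is preserved verbatim after substitution.

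With these two ingredients the statement follows at once. For part (2), each row of $\widetilde{\bm M}(\bm x)$ is a CAS, so by Definition \ref{CAS-def} its generating functions obey $\sum_u F_{u,v}(\bm z)\,\overline F_{u,v}(\bm z^{-1})=Np^m$; applying the substitution turns this into $\sum_u \hat G_{u,v}(z)\,\overline{\hat G}_{u,v}(z^{-1})=Np^m$, a constant, which by the dictionary is exactly the CSS property. That each column is also a CAS follows because para-unitarity $\bm M(\bm z)\,\bm M^\dagger(\bm z^{-1})=c\,\bm I_N$ of a square matrix forces $\bm M^\dagger(\bm z^{-1})\,\bm M(\bm z)=c\,\bm I_N$ as well, whose diagonal entries are precisely the column CAS identities. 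For part (1), the defining identities of a CCA — the CAS identity on each row together with the pairwise mutual-orthogonality identities $\sum_v F_{1,v}(\bm z)\,\overline F_{2,v}(\bm z^{-1})=0$ between rows — are all Laurent-polynomial identities, and the substitution carries them to the corresponding univariate identities, which are exactly the CSS and mutual-orthogonality conditions defining a CCC.

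The one genuinely delicate point, which I would verify most carefully, is the substitution principle itself: that the base-$p$ serialisation composed with the dimension permutation $\pi$ corresponds cleanly to the monomial substitution $z_k\mapsto z^{p^{\pi(k)}}$, and that this substitution commutes with the conjugate-reciprocal operation $F(\bm z)\mapsto\overline F(\bm z^{-1})$ pervading the correlation formalism. Everything else is a transparent transport of identities: because the CAS and CCA relations already hold identically as Laurent polynomials before substitution, no separate accounting of which multidimensional shifts collapse onto a given one-dimensional shift is needed.
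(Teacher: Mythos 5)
Your proposal is correct and follows essentially the intended argument: the paper itself states Fact~\ref{array-to-sequence} without proof (citing \cite{CCA}), but its entire framework in Section~2.2 --- CAS/CCA defined through generating functions and the para-unitary condition $\bm{M}(\bm{z})\bm{M}^{\dagger}(\bm{z}^{-1})=c\,\bm{I}_N$ --- is set up precisely so that base-$p$ serialisation of the permuted array becomes the monomial substitution $z_k\mapsto z^{p^{\pi(k)}}$, which is exactly the substitution principle you prove and then use to transport the row, column, and orthogonality identities. One cosmetic slip: the CAS identity for row $u$ of $\widetilde{\bm{M}}(\bm{x})$ sums over the column index, i.e.\ $\sum_{v} F_{u,v}(\bm{z})\overline{F}_{u,v}(\bm{z}^{-1})=Np^m$, not $\sum_{u}$ with $v$ fixed (the latter is the column identity you correctly derive afterwards from para-unitarity).
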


\subsection{Algebraic Normal Form and Generalized Reed Muller Codes}

We are particularly interested in the case that the function $f(\bm{x})$ from $\Z_p^m$  to $\Z_q$ can be realized by {\em algebraic normal form} (ANF).

For $p=2$, the function $f(\bm{x})$ can be realized by {\em generalized Boolean function} (GBF) from $\F_{2}^m$ to $\Z_q$. Every such function can be written in ANF as a sum of monomials of the form $x_{0}^{i_0}x_{1}^{i_1}\cdots x_{m}^{i_m}$ over $\Z_q$, where $i_k=0$ or $1$ for $0\leq k\leq m-1$.

For $q=p$ prime, the function $f(\bm{x})$ can be realized by function from $\F_{p}^m$ to $\F_p$. Every such function can be written in ANF as a sum of monomials of the form $x_{0}^{i_0}x_{1}^{i_1}\cdots x_{m}^{i_m}$ over $\F_p$, where $0\leq i_k\leq p-1$ for $0\leq k\leq m-1$.

The $r$th-order classical Reed-Muller code RM($r, m$) \cite{MacWilliams} is defined to be the binary code whose codewords are (the sequences evaluated by) the Boolean functions of degree at most $r$ in variables $x_0, x_1, \cdots, x_{m-1}$. The code RM($r, m$) is linear, has minimum Hamming distance $2^{m-r}$.
There are two types of  nonbinary generalizations of the classical Reed-Muller codes.

The first type of generalization \cite{DavisJedwab99, Paterson00}, denoted by RM$_q$($r, m$), directly generalize alphabets  of classical Reed-Muller codes from binary to $q$-ary  case. RM$_q$($r, m$) is defined to be a linear code over $\Z_q$ comprised of all the $q$-ary sequences evaluated by ANF of GBFs from $\F_{2}^m$ to $\Z_q$ with degree less than or equal to $r$. It is known that RM$_q$($r, m$) also has minimum Hamming distance $2^{m-r}$.

The another type of generalization, denoted by GRM$_p$($r, m$) in this paper, was studied in \cite{Kasami68, Delsarte1970}. GRM$_p$($r, m$) is defined to be a linear code over $\F_p$ comprised of all the $p$-ary sequences evaluated by ANF of functions from $\F_{p}^m$ to $\F_p$ with degree less than or equal to $r$. It is known in \cite{Kasami68} that GRM$_p$($r, m$) has minimum Hamming distance $(R+1)\cdot p^{Q}$, where $R$ is the remainder and $Q$ the quotient resulting from dividing $m(p-1)-r$ by $p-1$.

\subsection{Hadamard Matrices and  Generalized Seed PU Matrices}

A complex  matrix $\bm{H}$ of order $N$ is
called {\em Butson-type}  Hadamard (BH) matrix \cite{Butson62} if $\bm{H}\cdot\bm{H}^{\dagger}=N\cdot \bm{I}_N$  and all the
entries of \(\bm{H}\) are $q$th roots of unity. For given $N$ and $q$,  the set of all BH matrices is denoted by $H(q,N)$.

Two BH matrices, $\bm{H}_1, \bm{H}_2\in H(q,N)$ are
called {\em equivalent}, denoted by $\bm{H}_1\simeq \bm{H}_2$, if there exist
diagonal unitary matrices $\bm{Q}_1, \bm{Q}_2$ where each diagonal entry
 is a $q$th root of unity and permutation matrices \(\bm{P}_1\),
\(\bm{P}_2\) such that $\bm{H}_1 = \bm{P}_1 \cdot\bm{Q}_1 \cdot\bm{H}_2\cdot \bm{Q}_2 \cdot \bm{P}_2$.

For a BH matrix $\bm{H}\in H(q,N)$, define its phase matrix $\widetilde{\bm{H}}$ by
$\widetilde{H}_{i,j}=s$ if $H_{i,j}=\omega^s$. Suppose that $S_{\widetilde{H}}(q, N)$ is a set containing all the phase matrix of the representatives of BH matrices in $H(q,N)$ with respected to the equivalence relation.

In this and the next subsections, we revisit and extend a method  proposed in \cite{CCA} to construct the so  called {\em generalized seed PU matrices}. Note that the order of the seed PU matrices and the generalized seed PU matrices in \cite{CCA} are $N=p$ and $N=2^n$, respectively, while the order of the generalized seed PU matrices here is straightforwardly extended to $N=p^n$ for arbitrary $p$.

The {\em delay matrix} $\bm{D}(z)$  of order $p$ is defined by  $\bm{D}(z)=diag\{z^0, z^1, z^2,$ $\cdots, z^{p-1}\}$. And the generalized delay matrix $\bm{D}(\bm{z})$ with multi-variables $\bm{z}=(z_0, z_{1}, \cdots,  z_{n-1})$
is defined by the Kronecker product of $\bm{D}(z_j)$ for $0\leq j<n$, i.e.,
\begin{equation}\label{delay}
\bm{D}(\bm{z})=\bm{D}(z_{n-1}) \otimes\cdots \otimes \bm{D}(z_{1})\otimes \bm{D}(z_{0}).
\end{equation}
By mathematical induction, it is straightforward to show the generalized delay matrix $\bm{D}(\bm{z})$ can be explicitly  expressed by a diagonal matrix $\bm{D}(\bm{z})=diag\{\phi_0(\bm{z}),\phi_1(\bm{z}), \cdots, \phi_{p^n-1}(\bm{z})\}$
where $\phi_y(\bm{z})=\prod_{j=0}^{n-1}z_{j}^{x_j}$ for $y=\sum_{j=0}^{n-1}x_j\cdot p^j$.

Let $\bm{H}^{\{k\}}$ be arbitrary BH matrices chosen from $H(q, N)$ for $0\leq k\leq m$ and $\bm{D}(\bm{z}_k)$ be the generalized delay matrices with $\bm{z}_k=(z_{kn}, z_{kn+1}, \cdots,  z_{kn+n-1})$ for $0\leq k<m$. Let $\bm{z}=(z_{0}, z_{1}, \cdots,  z_{nm-1})$.
It has been shown in \cite{CCA} that a multivariate polynomial matrix $\bm{M}(\bm{z})$, defined by
\begin{equation}\label{seed-PU-2}
\bm{M}(\bm{z})=\bm{H}^{\{0\}}\cdot \bm{D}(\bm{z}_0)\cdot
\bm{H}^{\{1\}}\cdot\bm{D}(\bm{z}_1)\cdots \bm{H}^{\{m-1\}}\cdot
\bm{D}(\bm{z}_{m-1})\cdot \bm{H}^{\{m\}},
\end{equation}
 must be the generating matrix of a CCA, denoted by $\widetilde{\bm{M}}(\bm{x})$ where $\bm{x}=(x_0,x_1,\cdots, x_{mn-1})$. $\bm{M}(\bm{z})$ is called the generalized seed PU matrix.

\subsection{Extracting Functions from Generalized Seed PU Matrices}

Let $\bm{x}_k= (x_{kn}, x_{kn+1}, \cdots, x_{kn+n-1})\in \Z_p^n$ be the $p$-ary expansion of $y_k$ and $\bm{y}=(y_0, y_{1}, \cdots,  y_{m-1})$. Each entry of $\widetilde{\bm{M}}(\bm{x})$ extracted from the generalized seed PU matrix $\bm{M}(\bm{z})$ can be represented by a function $f_{u,v}(\bm{y})$  from $\Z_{p^n}^m$ to $\Z_q$ determined by the formula
$$\omega^{f_{u,v}(\bm{y})}={H}^{\{0\}}_{u,y_0}\cdot\left(\prod^{m-1}_{k=1}{H}^{\{k\}}_{y_{k-1},y_k}
\right)\cdot {H}^{\{m\}}_{y_{m-1},v}.$$
The function
 $f_{u,v}(\bm{y})$  can be alternatively presented by $f_{u,v}(\bm{x})$, which is an array from $\Z_{p}^{mn}$ to $\Z_q$. The general form of the entries of $\widetilde{\bm{M}}(\bm{x})$, denoted by $f(\bm{x})$ (or $f(\bm{y})$),  is given  by introducing a basis of the functions from $\Z_{p^n}$ to  $\mathbb{Z}_q$.

\begin{definition}\label{basis}
Let $\delta_{\alpha}(y)$ be a function: $\Z_{p^n}\rightarrow \mathbb{Z}_q$ such that $\delta_{\alpha}(\beta)=\delta_{\alpha,\beta}$ where $\delta_{\alpha,\beta}$ is the Kronecker-delta function, i.e.,
$$\delta_{\alpha}(\beta)=\left\{
\begin{aligned}
&1, \mbox{if} \ \alpha=\beta, \\
&0, \mbox{if} \ \alpha\neq \beta.
\end{aligned} \right.$$
\end{definition}

The function $f(\bm{y})$  can be explicitly  represented by the combination of the  linear terms and the   quadratic terms with respect to the functions $\delta_{\alpha}(y_{k})$ for $\alpha\in \Z_{p^n}$ and $0\leq k\leq m-1$.

\begin{definition}\label{linear-term-def}
The {\em $\delta$-linear terms} are  the linear combinations of $\delta_{\alpha}(y_{k})$ over $\Z_q$ for $\alpha\in \Z_{p^n}$ and $0\leq k\leq m-1$. The collection of the $\delta$-linear terms is denoted by
\begin{equation}\label{linear-term-1}
\delta_L(q, p^n)=\left\{\sum_{k=0}^{m-1}\sum_{\alpha=0}^{p^n-1}c_{\alpha, k}\cdot \delta_{\alpha}(y_{k})\bigg| \forall c_{\alpha,k}\in \Z_q\right\}.
\end{equation}
\end{definition}
It has been shown in \cite[Lemma 7]{CCA} that the $\delta$-linear terms can be represented by
\begin{equation}\label{linear-term-2}
\delta_L(q, p^n)=\left\{\sum_{k=0}^{m-1}\sum_{\alpha=1}^{p^n-1}c_{\alpha, k}\cdot \delta_{\alpha}(y_{k})+c'\bigg| c_{\alpha,k}, c'\in \Z_q \right\},
\end{equation}
which is a free $\Z_q$-submodule of dimension $m(p^n-1)+1$ with basis $\{\delta_{\alpha}(y_{k}), 1| 0\leq k\leq m-1, 1\leq \alpha\leq q^n-1 \}$.

Let $\chi$ be a permutation of symbols $\{0, 1, \cdots, p^n-1\}$ and $\bm{\delta}(y)=(\delta_{0}(y), \delta_{1}(y), \cdots, \delta_{p^n-1}(y))$.
Denote the permutation of the vector function $\bm{\delta}(y)$ by
\begin{equation}
\bm{\delta}_{\chi}(y)=(\delta_{\chi(0)}(y), \delta_{\chi(1)}(y), \cdots, \delta_{\chi(p^n-1)}(y)).
\end{equation}
 The   {\em $\delta$-quadratic terms} can be obtained  from the phase matrices $\widetilde{\bm{H}}\in S_{\widetilde{H}}(q, p^n)$.

\begin{definition}\label{quadratic-term-def}
The   {\em $\delta$-quadratic terms} are quadratic forms:
\begin{equation}\label{quadratic-term}
\bm{g}_{\chi_{L}}(y_{0})\widetilde{\bm{H}}\bm{g}_{\chi_{R}}(y_{1})^T,
\end{equation}
where $\chi_L, \chi_R$ are permutations of symbols $\{0, 1, \cdots, p^n-1\}$  and $\widetilde{\bm{H}}\in S_{\widetilde{H}}(q, p^n)$. The collection of the $\delta$-quadratic terms is denoted by $\delta_Q(q, p^n)$.
\end{definition}

\begin{fact}\label{fact-2}
All the  functions extracted from the generalized seed PU matrices  can be represented in a general form
\begin{equation}\label{func-rep}
f(\bm{x})=\sum_{k=1}^{m-1}h_k(y_{k-1}, y_{k})+l(\bm{y}),
\end{equation}
where $h_k(\cdot,\cdot)\in \delta_Q(q, p^n) (1\leq k\leq m-1)$ and $l(\bm{y})\in \delta_L(q, p^n)$.
\end{fact}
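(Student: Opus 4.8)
The plan is to start from the multiplicative extraction formula for $f_{u,v}(\bm{y})$ and pass to its additive (phase) description. Writing each Hadamard entry as $H_{a,b}=\omega^{\widetilde H_{a,b}}$ and reading off the $\omega$-exponent of
\[
\omega^{f_{u,v}(\bm{y})}={H}^{\{0\}}_{u,y_0}\cdot\Big(\textstyle\prod_{k=1}^{m-1}{H}^{\{k\}}_{y_{k-1},y_k}\Big)\cdot {H}^{\{m\}}_{y_{m-1},v}
\]
yields, modulo $q$,
\[
f_{u,v}(\bm{y})=\widetilde H^{\{0\}}_{u,y_0}+\sum_{k=1}^{m-1}\widetilde H^{\{k\}}_{y_{k-1},y_k}+\widetilde H^{\{m\}}_{y_{m-1},v}.
\]
For fixed $u,v$ the two boundary summands depend on a single variable ($y_0$ and $y_{m-1}$). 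Since any $g\colon\Z_{p^n}\to\Z_q$ satisfies $g(y)=\sum_{\alpha}g(\alpha)\delta_{\alpha}(y)$, each boundary summand lies in $\delta_L(q,p^n)$ by (\ref{linear-term-1}). It remains only to show that every interior summand $\widetilde H^{\{k\}}_{y_{k-1},y_k}$ is a $\delta$-quadratic term up to a $\delta$-linear correction.

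The key step is to exploit the equivalence relation defining $S_{\widetilde H}(q,p^n)$. Each $\bm{H}^{\{k\}}\in H(q,p^n)$ is equivalent to a representative $\bm{H}$ whose phase matrix $\widetilde{\bm H}$ lies in $S_{\widetilde H}(q,p^n)$, so $\bm{H}^{\{k\}}=\bm{P}_1\bm{Q}_1\bm{H}\bm{Q}_2\bm{P}_2$ with permutation matrices $\bm{P}_1,\bm{P}_2$ inducing permutations $\sigma,\tau$ and diagonal unitaries $(\bm{Q}_1)_{c,c}=\omega^{r_c}$, $(\bm{Q}_2)_{d,d}=\omega^{s_d}$. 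An entrywise computation then gives, modulo $q$,
\[
\widetilde H^{\{k\}}_{y_{k-1},y_k}=\widetilde H_{\sigma(y_{k-1}),\,\tau(y_k)}+r_{\sigma(y_{k-1})}+s_{\tau(y_k)}.
\]
Evaluating the $\delta$-quadratic form $\bm{\delta}_{\chi_L}(y_{k-1})\,\widetilde{\bm H}\,\bm{\delta}_{\chi_R}(y_k)^{T}$ collapses the two delta-vectors onto their unique nonzero coordinates and returns $\widetilde H_{\chi_L^{-1}(y_{k-1}),\,\chi_R^{-1}(y_k)}$; choosing $\chi_L=\sigma^{-1}$, $\chi_R=\tau^{-1}$ matches the first summand, so it is an element $h_k(y_{k-1},y_k)\in\delta_Q(q,p^n)$. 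The remaining summands $r_{\sigma(y_{k-1})}$ and $s_{\tau(y_k)}$ are again single-variable functions, hence $\delta$-linear.

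To assemble the result, recall that $\delta_L(q,p^n)$ is a $\Z_q$-module: by (\ref{linear-term-2}) it is closed under addition and contains the constants. Hence the two boundary terms together with the $2(m-1)$ diagonal corrections produced above sum to a single $l(\bm{y})\in\delta_L(q,p^n)$, while the $h_k$ collect the quadratic parts; this is exactly (\ref{func-rep}). The main obstacle is the middle step: one must check that the equivalence transformation $\bm{P}_1\bm{Q}_1\bm{H}\bm{Q}_2\bm{P}_2$ separates cleanly into one quadratic piece and two single-variable pieces. Conceptually the two permutation factors are absorbed into the freedom of choosing $\chi_L,\chi_R$ in Definition \ref{quadratic-term-def}, whereas the two diagonal factors contribute only $\delta$-linear terms; the delicate bookkeeping is tracking whether each induced permutation enters as $\sigma$ or $\sigma^{-1}$ (which depends on the permutation-matrix convention) and keeping all additions consistent modulo $q$.
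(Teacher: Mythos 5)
Your proposal is correct. Note, however, that the paper itself offers no proof of Fact \ref{fact-2}: it is stated as part of the framework revisited from \cite{CCA}, so there is no in-paper argument to compare against line by line. Your derivation is a sound, self-contained reconstruction of the intended argument: passing from the multiplicative extraction formula to phases modulo $q$, absorbing the two boundary factors $\widetilde H^{\{0\}}_{u,y_0}$ and $\widetilde H^{\{m\}}_{y_{m-1},v}$ into $\delta_L(q,p^n)$ as single-variable functions, and reducing each interior factor to a representative $\widetilde{\bm H}\in S_{\widetilde H}(q,p^n)$ via the equivalence $\bm{H}^{\{k\}}=\bm{P}_1\bm{Q}_1\bm{H}\bm{Q}_2\bm{P}_2$, with the two diagonal factors contributing only single-variable (hence $\delta$-linear) corrections and the two permutation factors absorbed into the choice of $\chi_L,\chi_R$. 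The collapse identity you invoke,
\begin{equation*}
\bm{\delta}_{\chi_L}(y_{k-1})\,\widetilde{\bm H}\,\bm{\delta}_{\chi_R}(y_k)^{T}=\widetilde H_{\chi_L^{-1}(y_{k-1}),\,\chi_R^{-1}(y_k)},
\end{equation*}
is exactly the computation the paper carries out later inside the proof of Theorem \ref{theorem-q}, so your argument is fully consistent with the paper's machinery; in effect you have made explicit the proof that the paper delegates to \cite{CCA}. Your closing caveat about the $\sigma$ versus $\sigma^{-1}$ convention is harmless: under any convention the diagonal corrections remain functions of a single variable and the row/column permutations remain permutations, which is all the membership claims require.
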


\begin{fact}\label{fact-3}
Let $f(\bm{x})$ be a function (or array of size $\underbrace{p\times p \times \cdots \times p}_{mn}$) with the form (\ref{func-rep}) and $h_0(\cdot,\cdot), h_m(\cdot,\cdot)\in \delta_Q(q, p^n)$. Then the arrays
$$f_u(\bm{x})=f(\bm{x})+h_0(u,y_0), \ u\in \Z_{p^n}$$
form a CAS of size $p^n$, and the arrays
$$f_{u,v}(\bm{x})=f(\bm{x})+h_0(u,y_0)+h_m(y_m,v), \ u, v\in \Z_{p^n}$$
form a CCA.
\end{fact}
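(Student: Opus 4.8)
The plan is to show that the arrays $f_{u,v}(\bm{x})$ are precisely the entries of the array matrix $\widetilde{\bm{M}}(\bm{x})$ attached to a generalized seed PU matrix of the form (\ref{seed-PU-2}). Once this is done, the CCA assertion is immediate from the cited fact that every such $\bm{M}(\bm{z})$ is the generating matrix of a CCA, and the CAS assertion will follow by restricting to a single column. So the whole proof reduces to matching the given sum $f_{u,v}=f(\bm{x})+h_0(u,y_0)+h_m(y_{m-1},v)$ to the extraction formula $\omega^{f_{u,v}(\bm{y})}=H^{\{0\}}_{u,y_0}\big(\prod_{k=1}^{m-1}H^{\{k\}}_{y_{k-1},y_k}\big)H^{\{m\}}_{y_{m-1},v}$ for suitable BH matrices.

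First I would evaluate each $\delta$-quadratic term explicitly. By the Kronecker-delta property, the vector $\bm{\delta}_{\chi_L}(y_{k-1})$ is the one-hot vector supported at $\chi_L^{-1}(y_{k-1})$, so the quadratic form $\bm{\delta}_{\chi_L}(y_{k-1})\,\widetilde{\bm{H}}^{\{k\}}\,\bm{\delta}_{\chi_R}(y_k)^T$ selects the single entry $\widetilde{H}^{\{k\}}_{\chi_L^{-1}(y_{k-1}),\,\chi_R^{-1}(y_k)}$, whence $\omega^{h_k(y_{k-1},y_k)}=H^{\{k\}}_{\chi_L^{-1}(y_{k-1}),\,\chi_R^{-1}(y_k)}$. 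Introducing the matrix $\bm{G}^{\{k\}}$ obtained from $\bm{H}^{\{k\}}$ by permuting its rows by $\chi_L$ and its columns by $\chi_R$ (the permutations depending on $k$), this reads $\omega^{h_k(y_{k-1},y_k)}=(\bm{G}^{\{k\}})_{y_{k-1},y_k}$, and a one-line computation with the permutation matrices, $\bm{P}_L\bm{H}^{\{k\}}\bm{P}_R(\bm{P}_L\bm{H}^{\{k\}}\bm{P}_R)^{\dagger}=p^n\bm{I}$, shows $\bm{G}^{\{k\}}\in H(q,p^n)$. The key conceptual point is that the left and right permutations attached to each $\delta$-quadratic term are entirely independent, yet this is no obstruction: each pair is absorbed into its own factor $\bm{G}^{\{k\}}$, so the shared variable $y_k$ then appears as the \emph{un-permuted} column index of $\bm{G}^{\{k\}}$ and row index of $\bm{G}^{\{k+1\}}$, exactly as the matrix product requires.

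Next I would dispose of the $\delta$-linear part. Writing $l(\bm{y})=\sum_{k=0}^{m-1}l_k(y_k)+c'$ with $l_k(y_k)=\sum_{\alpha}c_{\alpha,k}\delta_{\alpha}(y_k)$, each summand $\omega^{l_k(y_k)}$ depends only on the column index $y_k$ of $\bm{G}^{\{k\}}$; hence multiplying on the right by the diagonal unitary matrix $\bm{Q}_k=\mathrm{diag}(\omega^{l_k(0)},\dots,\omega^{l_k(p^n-1)})$ yields $\bm{K}^{\{k\}}=\bm{G}^{\{k\}}\bm{Q}_k$, still a BH matrix, which absorbs $l_k$. Folding $c'$ into a global phase, the defining identity $\omega^{f_{u,v}(\bm{y})}=\omega^{h_0(u,y_0)}\,\omega^{\sum_{k=1}^{m-1}h_k(y_{k-1},y_k)}\,\omega^{h_m(y_{m-1},v)}\,\omega^{l(\bm{y})}$ becomes $(\bm{K}^{\{0\}})_{u,y_0}\prod_{k=1}^{m-1}(\bm{K}^{\{k\}})_{y_{k-1},y_k}\,(\bm{K}^{\{m\}})_{y_{m-1},v}$, which is precisely the extraction formula for the generalized seed PU matrix $\bm{M}(\bm{z})=\bm{K}^{\{0\}}\bm{D}(\bm{z}_0)\cdots\bm{K}^{\{m\}}$. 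By the result of \cite{CCA} this $\bm{M}(\bm{z})$ is the generating matrix of a CCA, proving the second assertion.

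Finally, for the CAS assertion I would realize $\{f_u\}$ as one column of such a CCA. Here $f_u$ uses only $h_0,\dots,h_{m-1}$, i.e.\ the $m$ matrices $\bm{K}^{\{0\}},\dots,\bm{K}^{\{m-1\}}$, yet it ranges over all $m$ blocks $y_0,\dots,y_{m-1}$, so it cannot be read off a product with only $m$ factors; the bookkeeping fix is to append a trailing \emph{normalized} BH matrix $\bm{K}^{\{m\}}$ whose first column is all-ones. Then $(\bm{K}^{\{m\}})_{y_{m-1},0}=1$, so $f_{u,0}=f_u$, and $\{f_u\}_u$ is exactly the $v=0$ column of the CCA generated by $\bm{M}(\bm{z})=\bm{K}^{\{0\}}\bm{D}(\bm{z}_0)\cdots\bm{K}^{\{m\}}$. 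Since $\bm{M}(\bm{z})$ is para-unitary it also satisfies $\bm{M}^{\dagger}(\bm{z}^{-1})\bm{M}(\bm{z})=c\,\bm{I}_{p^n}$, whose diagonal entries are precisely the CAS condition of Definition \ref{CAS-def} for each column; hence $\{f_u\}$ is a CAS of size $p^n$. I expect the main obstacle to be exactly this bookkeeping rather than any hard estimate: matching block counts (the appended trailing factor for the CAS) and verifying that every permuted-and-rescaled factor remains a genuine element of $H(q,p^n)$, so that the assembled product is literally of the form (\ref{seed-PU-2}).
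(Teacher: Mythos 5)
Your proof is correct and follows the route the paper itself relies on: the paper states this Fact without proof, importing it from the framework of \cite{CCA}, and your argument---absorbing the permutations $\chi_L,\chi_R$ and the $\delta$-linear terms into equivalent BH matrices $\bm{K}^{\{k\}}$ so that $f_{u,v}$ becomes exactly the function extracted from a matrix $\bm{K}^{\{0\}}\bm{D}(\bm{z}_0)\cdots\bm{K}^{\{m\}}$ of the form (\ref{seed-PU-2})---is precisely the reverse of the extraction described in Fact \ref{fact-2}, carried out with the same one-hot evaluation of the quadratic form that appears in the proof of Theorem \ref{theorem-q}. The two details you supplied that the paper leaves implicit, namely appending a trailing normalized BH matrix so that $f_u=f_{u,0}$, and reading the column-CAS property off the two-sided para-unitary identity $\bm{M}^{\dagger}(\bm{z}^{-1})\bm{M}(\bm{z})=c\,\bm{I}_{p^n}$, are both correct and are exactly what the CAS half of the statement requires.
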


\section{$\delta$-Linear Terms and $\delta$-Quadratic Terms}

CSSs and CCCs are constructed by Facts 1-3, the kernel of which are the  $\delta$-linear terms and $\delta$-quadratic terms. However, it have been shown in \cite{CCA} that these terms are calculated by heavy computation even for $p=n=2$. In this section, we will continue our study on $\delta$-linear terms and $\delta$-quadratic terms to avoid the computation of the Kronecker-delta functions in Definition \ref{basis}.

\subsection{$\delta$-Linear Terms}

From the definition of the function $\delta_{\alpha}(\cdot)$, any function $g(y)$: $\Z_{p^n}\rightarrow \mathbb{Z}_q$ can be represented by
$$g(y)=\sum_{\alpha\in \Z_{p^n}}g(\alpha)\delta_{\alpha}(y).$$
Then it is clear that the function $\sum_{\alpha\in \Z_{p^n}}c_{\alpha, k}\cdot \delta_{\alpha}(y_{k})$ in (\ref{linear-term-1}) can be expressed by a function  $l_k(y_{k}):\Z_{p^n}\rightarrow \mathbb{Z}_q$ such that $l_k(\alpha)=c_{\alpha, k}$.
\begin{theorem}\label{thm-l}
The collection of the $\delta$-linear terms can be represented in an alternative form:
\begin{equation*}
\delta_L(q, p^n)=\left\{\sum_{k=0}^{m-1}l_k(y_{k}) \bigg| \forall l_k(y_{k}):\Z_{p^n}\rightarrow \mathbb{Z}_q \right\}.
\end{equation*}
\end{theorem}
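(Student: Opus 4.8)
The plan is to show the two set descriptions coincide by exhibiting a bijection between the coefficient families $\{c_{\alpha,k}\}$ appearing in the defining form (\ref{linear-term-1}) and the function tuples $(l_0, l_1, \ldots, l_{m-1})$ appearing in the claimed form. The only ingredient needed is the pointwise representation $g(y)=\sum_{\alpha\in\Z_{p^n}}g(\alpha)\delta_{\alpha}(y)$ recorded just before the statement, which says the Kronecker-delta functions $\{\delta_{\alpha}\}_{\alpha\in\Z_{p^n}}$ span (indeed form a basis for) all functions $\Z_{p^n}\to\Z_q$.

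First I would fix an index $k$ and treat the inner sum in isolation. Given any choice of coefficients $c_{0,k},\ldots,c_{p^n-1,k}\in\Z_q$, set $l_k(y_{k})=\sum_{\alpha=0}^{p^n-1}c_{\alpha,k}\delta_{\alpha}(y_{k})$. Evaluating at an arbitrary $\beta\in\Z_{p^n}$ and using $\delta_{\alpha}(\beta)=\delta_{\alpha,\beta}$ gives $l_k(\beta)=c_{\beta,k}$, so the resulting function recovers its defining coefficients as its own values. Conversely, starting from an arbitrary function $l_k:\Z_{p^n}\to\Z_q$, the representation formula with $c_{\alpha,k}:=l_k(\alpha)$ expresses it as $l_k(y_{k})=\sum_{\alpha}c_{\alpha,k}\delta_{\alpha}(y_{k})$. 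These two assignments are mutually inverse, so for each fixed $k$ the map $(c_{\alpha,k})_{\alpha}\mapsto l_k$ is a bijection between $\Z_q^{p^n}$ and the set of all functions $\Z_{p^n}\to\Z_q$.

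Next I would observe that in (\ref{linear-term-1}) the coefficients $c_{\alpha,k}$ are chosen freely and independently over all pairs $(\alpha,k)$, with no coupling across distinct values of $k$. Hence the per-index bijection above assembles into a bijection between the full family $(c_{\alpha,k})_{\alpha,k}$ and the tuple $(l_0,l_1,\ldots,l_{m-1})$ of independently chosen functions. Summing the identity $\sum_{\alpha}c_{\alpha,k}\delta_{\alpha}(y_{k})=l_k(y_{k})$ over $0\le k\le m-1$ then matches each element of $\delta_L(q,p^n)$ written in the form (\ref{linear-term-1}) with the corresponding element $\sum_{k=0}^{m-1}l_k(y_{k})$ of the claimed form, and vice versa, which proves the two sets are equal.

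There is no serious obstacle here: the statement is essentially a restatement of the fact that the delta functions span the space of all $\Z_q$-valued functions on $\Z_{p^n}$. The only point requiring a line of care is verifying that the correspondence $c_{\alpha,k}\leftrightarrow l_k(\alpha)$ is genuinely a bijection in both directions, so that no linear term is lost or double-counted when passing between the two descriptions; this follows immediately from evaluating at Kronecker-delta points as above.
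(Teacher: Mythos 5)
Your proposal is correct and follows essentially the same route as the paper: the paper proves this theorem by exactly the observation you use, namely that any function $g:\Z_{p^n}\rightarrow\Z_q$ expands as $g(y)=\sum_{\alpha\in\Z_{p^n}}g(\alpha)\delta_{\alpha}(y)$, so that the coefficients $c_{\alpha,k}$ in the defining form and the values $l_k(\alpha)$ of an arbitrary function correspond bijectively. Your write-up merely makes explicit the per-index bijection and its assembly over $k$, which the paper leaves as ``it is clear.''
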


We are particularly interested in the case that the functions  in $\delta_L(q, p^n)$ can be realized by ANF introduced in Subsection 2.3. Recall that $\bm{x}_k= (x_{kn}, x_{kn+1}, \cdots, x_{kn+n-1})\in \Z_p^n$ be the $p$-ary expansion of $y_k\in \Z_{p^n}$.

\begin{corollary}\label{coro-1}
For the case $p=2$, the variables $y_k$ over $\Z_{2^n}$ can be replaced by $\bm{x}_k\in \F_2^n$, and
any function $l_k(y_k)$ from $\Z_{2^n}$ to $\Z_q$ can be represented by a GBF from $\F_{2}^{n}$ to $\Z_q$. Then we have
 \begin{equation*}
\delta_L(q, p^n=2^n)=\left\{ \sum_{k=0}^{m-1}\sum_{i=1}^{2^n-1}\left(c_{k,i}\cdot\prod_{j=0}^{n-1}x_{kn+j}^{i_j}\right)+c' \bigg| c_{k,i}, c'\in \Z_q\right\},
\end{equation*}
where  $(i_0,i_1,\cdots,i_{n-1})$ is the binary expansion of integer $i$.
\end{corollary}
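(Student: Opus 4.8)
The plan is to deduce the corollary from Theorem~\ref{thm-l} by realizing each single-variable summand $l_k(y_k)$ as a generalized Boolean function in algebraic normal form. Two ingredients are needed: first, the identification of $\Z_{2^n}$ with $\F_2^n$ via the binary expansion $y_k=\sum_{j=0}^{n-1}x_{kn+j}2^j$, which turns any $l_k:\Z_{2^n}\to\Z_q$ into a GBF in the variables $x_{kn},\dots,x_{kn+n-1}$; and second, the fact that every such GBF is a $\Z_q$-combination of the monomials $\prod_{j=0}^{n-1}x_{kn+j}^{i_j}$ indexed by $i=(i_0,\dots,i_{n-1})\in\{0,1\}^n$.

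For the forward inclusion (every element of $\delta_L$ has the stated monomial form) I would invoke the spanning property already recorded in (\ref{linear-term-1}): $\delta_L(q,2^n)$ is the $\Z_q$-span of the functions $\delta_\alpha(y_k)$. It therefore suffices to express each $\delta_\alpha(y_k)$ as a monomial combination. Writing $(a_0,\dots,a_{n-1})$ for the binary expansion of $\alpha$ and using that each $x_{kn+j}\in\{0,1\}$, so that $x^{i_j}=x$ when $i_j=1$ and $x^{i_j}=1$ when $i_j=0$, one has
\[
\delta_\alpha(y_k)=\prod_{j=0}^{n-1}\bigl(a_j x_{kn+j}+(1-a_j)(1-x_{kn+j})\bigr),
\]
and expanding this product yields precisely a $\Z_q$-combination of the monomials $\prod_j x_{kn+j}^{i_j}$. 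The reverse inclusion is immediate: each monomial $\prod_j x_{kn+j}^{i_j}$ depends only on $\bm{x}_k$, hence only on $y_k$, so it is one admissible choice of $l_k(y_k)$ and lies in $\delta_L$ by Theorem~\ref{thm-l}; summing $\Z_q$-combinations of such monomials over $k$ then stays inside $\delta_L$. Combining the two inclusions gives the claimed equality of sets.

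The only point requiring care --- and where I expect the slight bookkeeping --- is the constant term. Each block $l_k$ contributes its own constant monomial (the index $i=0$, giving the function $1$), and because $1$ is the same function in every block, the $m$ block-constants merge into the single global constant $c'$ of the corollary, with the inner sums correspondingly restricted to $1\le i\le 2^n-1$. This is exactly the passage from (\ref{linear-term-1}) to (\ref{linear-term-2}) and rests on the identity $\sum_{\alpha}\delta_\alpha(y_k)=1$. If one also wants uniqueness of the representation (matching the stated free-module dimension $m(2^n-1)+1$), a counting argument suffices: the module of functions $\F_2^n\to\Z_q$ has $q^{2^n}$ elements, equal to the number of coefficient tuples, so the existence just established upgrades to a bijection; only existence, however, is needed for the corollary itself.
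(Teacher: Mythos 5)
Your proposal is correct and follows essentially the same route the paper intends: Corollary~\ref{coro-1} is presented as an immediate consequence of Theorem~\ref{thm-l} together with the fact (recalled in Subsection~2.3) that every function from $\F_2^n$ to $\Z_q$ has an ANF, i.e.\ a change of basis from the Kronecker-delta functions to monomials, with the block constants merging into the single $c'$ exactly as in the passage from (\ref{linear-term-1}) to (\ref{linear-term-2}). Your explicit expansion $\delta_\alpha(y_k)=\prod_{j}\bigl(a_j x_{kn+j}+(1-a_j)(1-x_{kn+j})\bigr)$ simply supplies the standard detail behind that ANF fact, which the paper takes for granted.
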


\begin{corollary}\label{coro-2}
For the case $q=p$ prime, the variables $y_k$ over $\Z_{p^n}$ can be replaced by $\bm{x}_k\in \F_p^n$, and
any function $l_k(y_k)$ from $\Z_{p^n}$ to $\F_p$ can be represented by an ANF from $\F_{p}^{n}$ to $\F_p$. Then we have
 \begin{equation*}
\delta_L(q=p, p^n)=\left\{ \sum_{k=0}^{m-1}\sum_{i=1}^{p^n-1}\left(c_{k,i}\cdot\prod_{j=0}^{n-1}x_{kn+j}^{i_j}\right)+c' \bigg| c_{k,i}, c'\in \F_p\right\},
\end{equation*}
where  $(i_0,i_1,\cdots,i_{n-1})$ is the $p$-ary expansion of integer $i$.
\end{corollary}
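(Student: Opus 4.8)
The plan is to derive this corollary from Theorem \ref{thm-l} by specializing $q=p$ and then showing that, under the $p$-ary identification $\Z_{p^n}\cong\F_p^n$, every coordinate function $l_k$ is a reduced ANF. By Theorem \ref{thm-l} we have $\delta_L(p,p^n)=\{\sum_{k=0}^{m-1}l_k(y_k)\mid l_k:\Z_{p^n}\to\F_p\}$, and under the identification $y_k\leftrightarrow\bm{x}_k=(x_{kn},\dots,x_{kn+n-1})$ each $l_k$ becomes a function from $\F_p^n$ to $\F_p$. Thus the whole statement reduces to the classical fact that the evaluation map, sending a \emph{reduced} polynomial (one in which every variable occurs with exponent at most $p-1$) to the function it induces, is a bijection from reduced polynomials onto all functions $\F_p^n\to\F_p$. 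Granting this, I would reindex the reduced monomials in $x_{kn},\dots,x_{kn+n-1}$ by the integer $i$ whose $p$-ary expansion is $(i_0,\dots,i_{n-1})$, peel off the constant monomial ($i=0$) from each $l_k$, absorb the $m$ resulting constants into a single $c'\in\F_p$, and obtain exactly the displayed form with $i$ ranging over $1\le i\le p^n-1$.

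For the existence (surjectivity) half I would construct point indicators explicitly. Over $\F_p$, Fermat's little theorem gives $(x-a)^{p-1}=1$ for $x\ne a$ and $=0$ for $x=a$, so $1-(x-a)^{p-1}$ is a degree-$(p-1)$ polynomial equal to the single-variable Kronecker delta. Taking products, $\prod_{j=0}^{n-1}\bigl(1-(x_{kn+j}-a_j)^{p-1}\bigr)$ is a reduced polynomial equal to $1$ at $\bm{x}_k=\bm{a}$ and $0$ elsewhere; this is precisely the basis function $\delta_\alpha(y_k)$ of Definition \ref{basis} written in ANF, for $\alpha\leftrightarrow\bm{a}$. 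Hence $l_k(\bm{x}_k)=\sum_{\bm{a}\in\F_p^n}l_k(\bm{a})\prod_{j=0}^{n-1}\bigl(1-(x_{kn+j}-a_j)^{p-1}\bigr)$ is a reduced ANF inducing $l_k$, so every such function arises from a reduced polynomial.

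For uniqueness (injectivity) I would invoke a counting argument: there are exactly $p^n$ reduced monomials, since each of the $n$ exponents independently ranges over $\{0,\dots,p-1\}$, hence $p^{p^n}$ reduced polynomials over $\F_p$, which is exactly the number of functions $\F_p^n\to\F_p$; a surjection between finite sets of equal cardinality is a bijection. Alternatively one argues directly that a nonzero reduced polynomial cannot vanish identically, by induction on $n$ using that a one-variable polynomial of degree $\le p-1$ with $p$ roots is zero. I expect this uniqueness step to be the main obstacle: surjectivity via indicators is routine, but one must rule out that two distinct reduced ANFs collapse to the same function. The point is that the only relation among monomials as functions, namely $x_j^p=x_j$ over $\F_p$, has already been eliminated by restricting exponents to $\le p-1$; confirming the monomial count and the equality of cardinalities before invoking the bijection conclusion is where the care lies.
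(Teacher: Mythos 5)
Your proposal is correct and follows essentially the same route as the paper: specialize Theorem \ref{thm-l} to $q=p$, identify $\Z_{p^n}$ with $\F_p^n$ via the $p$-ary expansion, and pass from the Kronecker-delta basis to the monomial basis, i.e.\ the ANF representation of functions $\F_p^n\to\F_p$ described in Subsection 2.3. The only difference is one of detail: the paper treats the existence and uniqueness of the reduced ANF as a classical known fact, whereas you prove it (surjectivity via the Fermat indicator polynomials $1-(x-a)^{p-1}$, injectivity by counting), which is sound but not needed beyond citation---and note that for the stated set equality only the surjectivity half is actually required.
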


\begin{remark}
Both Corollaries 1 and 2 change the basis from the Kronecker-delta functions shown in Definition \ref{basis} to classical basis of monomials shown in Subsection 2.3. Moreover, from the basis of monomials, it is obvious that $\delta_L(q, p^n=2^n)$ and  $\delta_L(q=p, p^n)$ are  $\Z_q$-submodule of dimension $m(2^n-1)+1$  and $m(p^n-1)+1$, respectively, which agree with the results in \cite[Lemma 7]{CCA}.
\end{remark}

\subsection{$\delta$-Quadratic Terms}

Recall that  the   $\delta$-quadratic terms are quadratic forms:
$\bm{g}_{\chi_{L}}(y_{0})\widetilde{\bm{H}}\bm{g}_{\chi_{R}}(y_{1})^T$.  We will show in the rest of paper that, if we take the algebraic structure of BH matrices into consideration, the computation of the ANF of $\delta$-quadratic terms can be significantly simplified.

\begin{theorem}\label{theorem-q}
Let $\bm{H}$ be a BH matrix with entry $H_{u, v}=\omega^{h(u, v)}$ where $h(u, v)$ is a function from $\Z_{p^n}^2$ to $\Z_q$ for $u,v\in \Z_{p^n}$, and  $g(\cdot), g'(\cdot)$ be arbitrary  permutation functions over $\Z_{p^n}$. We have
\begin{equation}\label{qudratic-term-0}
h(g(y_0), g'(y_1))\in \delta_Q(q, p^n).
\end{equation}
\end{theorem}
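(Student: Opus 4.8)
The plan is to evaluate the bilinear form $\bm{\delta}_{\chi_L}(y_0)\widetilde{\bm{H}}\bm{\delta}_{\chi_R}(y_1)^T$ in closed form using the sifting property of the Kronecker-delta functions, and then to choose the permutations $\chi_L,\chi_R$ so that the resulting expression coincides with $h(g(y_0),g'(y_1))$. Since $\widetilde{H}_{i,j}=h(i,j)$ by the hypothesis $H_{u,v}=\omega^{h(u,v)}$, the entire argument reduces to bookkeeping of which single entry of $\widetilde{\bm{H}}$ survives the double sum.

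First I would expand the product as $\sum_{i=0}^{p^n-1}\sum_{j=0}^{p^n-1}\delta_{\chi_L(i)}(y_0)\,h(i,j)\,\delta_{\chi_R(j)}(y_1)$. For a fixed $y_0$, the factor $\delta_{\chi_L(i)}(y_0)$ equals $1$ exactly when $\chi_L(i)=y_0$, i.e. when $i=\chi_L^{-1}(y_0)$, and vanishes otherwise; likewise $\delta_{\chi_R(j)}(y_1)=1$ iff $j=\chi_R^{-1}(y_1)$. Hence the double sum collapses to the single term $h(\chi_L^{-1}(y_0),\chi_R^{-1}(y_1))$. Because $g$ and $g'$ are permutations of $\Z_{p^n}$, their inverses are again permutations, so taking $\chi_L=g^{-1}$ and $\chi_R=(g')^{-1}$ yields $\bm{\delta}_{\chi_L}(y_0)\widetilde{\bm{H}}\bm{\delta}_{\chi_R}(y_1)^T=h(g(y_0),g'(y_1))$. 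When the phase matrix of $\bm{H}$ is itself a representative in $S_{\widetilde{H}}(q,p^n)$, this already exhibits $h(g(y_0),g'(y_1))$ as an element of $\delta_Q(q,p^n)$.

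The remaining point, and the one I expect to be the main obstacle, is that the defining set $S_{\widetilde{H}}(q,p^n)$ contains only the phase matrices of class representatives, whereas the theorem allows an arbitrary BH matrix $\bm{H}$. To handle this I would invoke the equivalence relation: $\bm{H}=\bm{P}_1\bm{Q}_1\bm{H}_0\bm{Q}_2\bm{P}_2$ for some representative $\bm{H}_0$ with $\widetilde{\bm{H}}_0\in S_{\widetilde{H}}(q,p^n)$, diagonal $q$th-root-of-unity matrices $\bm{Q}_1,\bm{Q}_2$, and permutation matrices $\bm{P}_1,\bm{P}_2$ realizing permutations $\pi_1,\pi_2$. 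Reading this off entrywise gives $h(u,v)=h_0(\pi_1(u),\pi_2(v))+a(u)+b(v)$, where $h_0$ is the phase of $\bm{H}_0$ and $a,b$ are the phase functions of $\bm{Q}_1,\bm{Q}_2$ (composed with the $\pi$'s). The permutation part is harmless: it is absorbed by replacing $\chi_L=g^{-1}$ with $g^{-1}\pi_1^{-1}$ and $\chi_R=(g')^{-1}$ with $(g')^{-1}\pi_2^{-1}$, so that $h_0(\pi_1 g(y_0),\pi_2 g'(y_1))$ is genuinely of the form $\bm{\delta}_{\chi_L}(y_0)\widetilde{\bm{H}}_0\bm{\delta}_{\chi_R}(y_1)^T$ with $\widetilde{\bm{H}}_0$ a representative.

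The genuine difficulty is the diagonal part: the correction terms become $a(g(y_0))$ and $b(g'(y_1))$, each a function of a single variable, and therefore a $\delta$-linear term in the sense of Theorem~\ref{thm-l}. I would argue that these single-variable terms do not spoil the role played by $\delta_Q(q,p^n)$ in Facts~\ref{fact-2} and \ref{fact-3}, since in the general form (\ref{func-rep}) any such linear contribution is collected into the $\delta$-linear summand $l(\bm{y})$; equivalently, one may normalise the representative $\bm{H}_0$ (dephasing its first row and column) and push the residual phases into the linear layer. Verifying that this bookkeeping is consistent --- that the quadratic skeleton $h_0(\pi_1 g(\cdot),\pi_2 g'(\cdot))$ lands exactly in $\delta_Q$ while the diagonal phases land exactly in $\delta_L$ --- is the crux, whereas the Kronecker-delta collapse of the second paragraph is the easy and decisive computational step.
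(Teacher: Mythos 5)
Your proposal is correct, and its computational core --- expanding $\bm{\delta}_{\chi_L}(y_0)\widetilde{\bm{H}}\bm{\delta}_{\chi_R}(y_1)^T$, collapsing the double sum by the sifting property of the Kronecker deltas, and then taking $\chi_L=g^{-1}$ and $\chi_R=(g')^{-1}$ --- is exactly the paper's proof. Where you diverge is the equivalence-class issue that occupies your third paragraph. The paper dispatches it in one sentence: it takes $\bm{H}$ itself as the representative of its equivalence class, which is legitimate because $S_{\widetilde{H}}(q,p^n)$ is not canonically fixed; any choice of one representative per class is allowed, so one may let the class of $\bm{H}$ be represented by $\bm{H}$, after which the sifting computation already yields the literal membership $h(g(y_0),g'(y_1))\in\delta_Q(q,p^n)$ and nothing further is needed. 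Your alternative --- fixing the representatives once and for all, decomposing $\bm{H}=\bm{P}_1\bm{Q}_1\bm{H}_0\bm{Q}_2\bm{P}_2$, absorbing the permutations into $\chi_L,\chi_R$ and the diagonal phases into single-variable terms --- is also sound, but observe that in the general case it proves the formally weaker statement $h(g(y_0),g'(y_1))\in\delta_Q(q,p^n)+\delta_L(q,p^n)$ rather than the stated membership in $\delta_Q(q,p^n)$; as you note, this suffices for the constructions because the linear residue is swallowed by $l(\bm{y})$ in the general form (\ref{func-rep}) of Facts~\ref{fact-2} and~\ref{fact-3}. What your longer route buys is an explicit record of something the paper leaves implicit: the set $\delta_Q(q,p^n)$ depends on the choice of representatives, but only modulo $\delta_L(q,p^n)$, which is precisely why all downstream constructions of CSSs and CCCs are insensitive to that choice.
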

\begin{proof}
We take $\bm{H}$ as a representative of its equivalence class of BH matrices. Then the entry of its phase matrix $\widetilde{\bm{H}}$ is given by $\widetilde{H}_{u, v}=h(u, v)$ for $u,v\in \Z_{p^n}$.  We have
$$\bm{g}_{\chi_{L}}(y_{0})\widetilde{\bm{H}}\bm{g}_{\chi_{R}}(y_{1})^T=\sum_{u\in \Z_{p^n}}\sum_{v\in \Z_{p^n}}h(u, v)\delta_{\chi_L(u)}(y_0)\delta_{\chi_R(v)}(y_1),$$
where $\chi_L, \chi_R$ are permutations of symbols $\{0, 1, \cdots, p^n-1\}$.
We define the Kronecker-delta function  $\delta_{\alpha, \beta}(y_0, y_1)$ from $\Z_{p^n}^2$ to $\mathbb{Z}_q$ such that $\delta_{\alpha, \beta}(y_0, y_1)=\delta_{\alpha}(y_0)\cdot \delta_{\beta}(y_1)$, i.e.,
$$\delta_{\alpha, \beta}(y_0, y_1)=\left\{
\begin{aligned}
&1, \mbox{if} \ (\alpha, \beta)=(y_0, y_1), \\
&0, \mbox{if} \ (\alpha, \beta)\neq(y_0, y_1).
\end{aligned} \right.$$
Then any function $h(y_0, y_1)$: $\Z_{p^n}^2\rightarrow \mathbb{Z}_q$ can be represented by
$$h(y_0, y_1)=\sum_{\alpha\in \Z_{p^n}}\sum_{\beta\in \Z_{p^n}}h(\alpha, \beta)\cdot\delta_{\alpha, \beta}(y_0, y_1).$$
Thus, we have
\begin{eqnarray*}
\bm{g}_{\chi_{L}}(y_{0})\widetilde{\bm{H}}\bm{g}_{\chi_{R}}(y_{1})^T&=&\sum_{u\in \Z_{p^n}}\sum_{v\in \Z_{p^n}}h(u, v)\cdot\delta_{\chi_L(u), \chi_R(v)}(y_0, y_1)\\
&=&\sum_{u\in \Z_{p^n}}\sum_{v\in \Z_{p^n}}h(\chi_L^{-1}(u), \chi_R^{-1}(v))\cdot\delta_{u, v}(y_0, y_1)\\
&=&h(\chi_L^{-1}(y_0), \chi_R^{-1}(y_1)).
\end{eqnarray*}
Let $\chi_{L}$ and $\chi_{R}$ be the inverse functions of the permutation functions $g(\cdot)$ and $g'(\cdot)$ over $\Z_{p^n}$, respectively. We complete the proof.
\end{proof}

\section{Constructions from  DFT Matrices over $\Z_N$}

In this section, we assume $N=p^n=q$. Discrete Fourier transform (DFT) matrix of order $N$ is a BH matrix with entry $H_{u, v}=w^{u\cdot v}$ for $u, v\in \Z_N$. Then the entry of its  phase matrix is given by $\widetilde{H}_{u, v}=u\cdot v$.  We have
$$g(y_0)g'(y_1)\in \delta_Q(N, N)$$ for arbitrary  permutation functions $g(\cdot), g'(\cdot)$  over $\Z_{p^n}$. These terms are called  the  $\delta$-quadratic terms determined by DFT matrices.

\begin{remark}
From the arguments on the equivalence of $\delta$-quadratic terms in \cite{CCA}, there are totally $\varphi(N)\cdot((N-1)!)^2$ $\delta$-quadratic terms determined by DFT matrices, where $\varphi(N)$ is the Euler function of integer $N$. These functions will be explicitly given in this section for $N$ prime or a power of $2$.
\end{remark}

Recall that the set of the $\delta$-linear terms in Theorem 1 is given by
\begin{equation}
\delta_L(q=N, p^n=N)=\left\{\sum_{k=0}^{m-1}l_k(y_{k}) \bigg| \forall l_k(y_{k}):\Z_{N}\rightarrow \mathbb{Z}_N \right\}.
\end{equation}
We obtain the following results immediately by Facts 2 and 3.

\begin{theorem}
Let $g_k(\cdot), g_k'(\cdot)$ are arbitrary  permutation functions over $\Z_{N}$ for  $0\leq k\leq m$, $l(\bm{y})$ arbitrary functions in the set $\delta_L(N, N)$, and $f(\bm{y})$  an $N$-ary function  with the form
\begin{equation}
f(\bm{y})=\sum_{k=1}^{m-1}g_{k}(y_{k-1})\cdot g'_{k}(y_{k})+l(\bm{y}).
\end{equation}
  \begin{itemize}
	\item[(1)] The following functions from $\Z_N$ to $\Z_N$ form a CAS of size $N$:
\begin{equation}
f_u(\bm{y})=f(\bm{y})+u\cdot g_0(y_0) \ \mbox{for}\  u\in \Z_N.
\end{equation}
 \item[(2)] The following functions from $\Z_N$ to $\Z_N$ form a CCA of size $N$:
\begin{equation}
f_u(\bm{y})=f(\bm{y})+u\cdot g_0(y_0)+v\cdot g_{0}'(y_{m-1}) \ \mbox{for}\  u\in \Z_N.
\end{equation}
\end{itemize}
\end{theorem}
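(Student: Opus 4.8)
The plan is to read off $f(\bm{y})$ as an instance of the general form (\ref{func-rep}) and then invoke Fact \ref{fact-3} verbatim; the only real work is to certify that each summand lies in the correct one of $\delta_Q(N,N)$ or $\delta_L(N,N)$. Fact \ref{fact-2} tells us that (\ref{func-rep}) is precisely the template for functions extractable from a generalized seed PU matrix, and Fact \ref{fact-3} converts any such function, together with two further $\delta$-quadratic offset terms $h_0,h_m$, into a CAS and a CCA. So the whole task reduces to (a) fitting $f$ into (\ref{func-rep}) and (b) supplying valid offsets. The engine for both is the DFT specialization of Theorem \ref{theorem-q} already recorded at the start of this section: since $N=p^n=q$, the order-$N$ DFT matrix lies in $H(N,N)$ with phase function $h(u,v)=u\cdot v$, whence $g(y_0)\,g'(y_1)\in\delta_Q(N,N)$ for every pair of permutations $g,g'$ of $\Z_N$.

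Next I would carry out the membership checks. Writing $h_k(y_{k-1},y_k)=g_k(y_{k-1})\,g_k'(y_k)$ for $1\le k\le m-1$, the DFT specialization gives $h_k\in\delta_Q(N,N)$ immediately, while $l(\bm{y})\in\delta_L(N,N)$ holds by hypothesis; hence $f(\bm{y})=\sum_{k=1}^{m-1}h_k(y_{k-1},y_k)+l(\bm{y})$ is exactly of the form (\ref{func-rep}). For the offset terms I would take $h_0(u,y_0)=u\cdot g_0(y_0)$, which is the DFT phase evaluated with the identity permutation in the first slot and $g_0$ in the second, and $h_m(y_{m-1},v)=g_0'(y_{m-1})\cdot v$, which is the DFT phase with $g_0'$ in the first slot and the identity in the second. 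Both are therefore in $\delta_Q(N,N)$ by the same specialization of Theorem \ref{theorem-q}.

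With these ingredients in hand, part (1) of Fact \ref{fact-3} yields that $f_u(\bm{y})=f(\bm{y})+h_0(u,y_0)=f(\bm{y})+u\cdot g_0(y_0)$ form a CAS of size $N$, and part (2) yields that $f_{u,v}(\bm{y})=f(\bm{y})+h_0(u,y_0)+h_m(y_{m-1},v)=f(\bm{y})+u\cdot g_0(y_0)+v\cdot g_0'(y_{m-1})$ form a CCA. (Along the way I would note that part (2) of the statement should read $f_{u,v}$ with $u,v\in\Z_N$, and that the symbol $y_m$ appearing in Fact \ref{fact-3} is to be read as $y_{m-1}$.) Because all of the structural content is already discharged by Theorem \ref{theorem-q} and Fact \ref{fact-3}, there is no genuine analytic obstacle; the only delicate point is bookkeeping — correctly identifying the DFT phase as the bilinear form $u\cdot v$, and recognizing the two coset-offset terms $u\cdot g_0(y_0)$ and $v\cdot g_0'(y_{m-1})$ as \emph{degenerate} $\delta$-quadratic terms in which one of the two permutations is the identity. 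Once that identification is in place, the result is indeed immediate from Facts \ref{fact-2} and \ref{fact-3}.
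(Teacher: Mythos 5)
Your proof is correct and takes essentially the same route as the paper: the paper derives this theorem ``immediately by Facts 2 and 3'' after recording the DFT specialization of Theorem \ref{theorem-q} (namely $g(y_0)\,g'(y_1)\in\delta_Q(N,N)$ for permutations $g,g'$ of $\Z_N$), with the offsets $u\cdot g_0(y_0)$ and $v\cdot g_0'(y_{m-1})$ playing the role of the $\delta$-quadratic terms $h_0,h_m$ in Fact \ref{fact-3}, exactly as you argue via the degenerate case where one permutation is the identity. Your side remarks on the statement's typos (part (2) should read $f_{u,v}$ with $u,v\in\Z_N$, and $y_m$ in Fact \ref{fact-3} should be $y_{m-1}$) are also accurate.
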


In the rest of the section, we study the case that the functions $f(\bm{y})$ can be realized by ANF shown in Subsection 2.3.

\subsection{Constructions from PPs over $\F_p$}

In this subsection, we will set $N=q=p$ prime and $n=1$. Then we have $g_k(\cdot), g_k'(\cdot)$ are arbitrary PPs over $\F_{p}$,  $y_k=x_k$ and $\bm{x}=(x_0,x_1,\cdots, x_{m-1})$.
From Corollary 2, the set of the $\delta$-linear terms  is given by
 \begin{equation}\label{linear-1}
\delta_L(q=p, p^n=p)=\left\{ \sum_{k=0}^{m-1}\sum_{i=1}^{p-1}c_{k, i}x_k^i+c'\bigg| c_{k,i}, c'\in \F_p\right\}.
\end{equation}

\begin{theorem}\label{thm-4}
Let $g_k(\cdot), g_k'(\cdot)$ are arbitrary  PPs over $\F_{p}$ for  $0\leq k\leq m$, $\pi$  arbitrary  permutation of symbols $\{0, 1, \cdots m-1\}$, and $\forall l(\bm{x})\in \delta_L(p, p)$. For any $p$-ary function $f(\bm{x})$ with the form
\begin{equation}\label{array-1}
f(\bm{x})=\sum_{k=1}^{m-1}g_{k}(x_{k-1})\cdot g'_{k}(x_{k})+l(\bm{x}),
\end{equation}
  we have the following results.
\begin{itemize}
	\item[(1)] The $p$-ary sequences  evaluated by  functions from $\F_p^m$ to $\F_p$:
\begin{equation}\label{CSS-1}
f_u(\bm{x})=f(\pi\cdot\bm{x})+u\cdot g_0(x_{\pi(0)}) \ \mbox{for}\  u\in \F_p,
\end{equation}
form a CSS of size $p$.
   \item[(2)] The  $p$-ary sequences  evaluated by functions from $\F_p^m$ to $\F_p$:
\begin{equation}\label{CCC-1}
f_{u,v}(\bm{x})=f(\pi\cdot\bm{x})+u\cdot g_0(x_{\pi(0)})+v\cdot g_{0}'(x_{\pi(m-1)}) \ \mbox{for}\  u, v\in \F_p,
\end{equation}
form a CCC.
\end{itemize}
\end{theorem}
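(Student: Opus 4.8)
The plan is to reduce Theorem~\ref{thm-4} entirely to the machinery already assembled, namely Theorem~\ref{theorem-q}, Fact~\ref{fact-2}, Fact~\ref{fact-3}, and Fact~\ref{array-to-sequence}, so that essentially no new correlation computation is needed. First I would record that since $N=q=p$ and $n=1$, the DFT matrix over $\Z_p$ is a BH matrix in $H(p,p)$ with phase matrix entry $\widetilde H_{u,v}=u\cdot v$. Applying Theorem~\ref{theorem-q} with this $\bm{H}$ and with the permutation functions $g_k,g_k'$ (which for $n=1$ are exactly the PPs over $\F_p$) gives that each product $g_k(x_{k-1})\cdot g_k'(x_k)$ lies in $\delta_Q(p,p)$. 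Combined with the description of $\delta_L(p,p)$ in~(\ref{linear-1}) from Corollary~\ref{coro-2}, this shows that the function $f(\bm{x})$ in~(\ref{array-1}) is precisely of the general form~(\ref{func-rep}) in Fact~\ref{fact-2}, with $h_k(y_{k-1},y_k)=g_k(y_{k-1})g_k'(y_k)$ and $l(\bm{y})=l(\bm{x})$.

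Next I would invoke Fact~\ref{fact-3}. The offset terms $h_0(u,y_0)=u\cdot g_0(x_0)$ and $h_m(y_{m-1},v)=v\cdot g_0'(x_{m-1})$ must themselves be verified to lie in $\delta_Q(p,p)$: this again follows from Theorem~\ref{theorem-q}, since $u\cdot g_0(y_0)=\widetilde H_{u,\,g_0(y_0)}$ is obtained from the DFT phase matrix by fixing the first index to the constant $u$ and composing the second with the PP $g_0$ (and symmetrically for $h_m$). With $h_0,h_m\in\delta_Q(p,p)$ confirmed, Fact~\ref{fact-3} yields directly that the arrays $f_u(\bm{x})=f(\bm{x})+u\cdot g_0(x_0)$ form a CAS of size $p$, and the arrays $f_{u,v}(\bm{x})=f(\bm{x})+u\cdot g_0(x_0)+v\cdot g_0'(x_{m-1})$ form a CCA. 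This is the array-level statement; it is stated before any permutation $\pi$ is applied.

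Finally I would pass from arrays to sequences using Fact~\ref{array-to-sequence}. Here is the one place requiring care: Fact~\ref{fact-3} produces a CAS and a CCA as matrices $\widetilde{\bm{M}}(\bm{x})$ of arrays, and Fact~\ref{array-to-sequence} guarantees that, for \emph{any} permutation $\pi$ of $\{0,1,\dots,m-1\}$, the sequences evaluated by the functions in each column/row of $\widetilde{\bm{M}}(\pi\cdot\bm{x})$ form a CSS, and all entries of $\widetilde{\bm{M}}(\pi\cdot\bm{x})$ form a CCC. Substituting $\pi\cdot\bm{x}$ into $f$ replaces each $x_k$ by $x_{\pi(k)}$, which turns $f(\bm{x})$, $u\cdot g_0(x_0)$, and $v\cdot g_0'(x_{m-1})$ into $f(\pi\cdot\bm{x})$, $u\cdot g_0(x_{\pi(0)})$, and $v\cdot g_0'(x_{\pi(m-1)})$ respectively, matching~(\ref{CSS-1}) and~(\ref{CCC-1}) exactly. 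Thus part~(1) follows from the CAS together with the column (or row) clause of Fact~\ref{array-to-sequence}, and part~(2) from the CCA together with clause~(1) of Fact~\ref{array-to-sequence}.

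The proof is therefore a verification-and-assembly argument rather than a computation, and I expect no genuine obstacle in the correlation analysis, since that is entirely absorbed by Facts~\ref{array-to-sequence}--\ref{fact-3}. The only subtle point worth stating explicitly is the bookkeeping in the last paragraph: confirming that the offset monomials $u\cdot g_0(x_{\pi(0)})$ and $v\cdot g_0'(x_{\pi(m-1)})$ are consistent with the $h_0,h_m$ slots of Fact~\ref{fact-3} after the variable permutation $\pi$ is applied, and that for $n=1$ the index $y_{m-1}$ (rather than the typographic $y_m$ appearing in Fact~\ref{fact-3}) is the correct terminal variable. I would make the identification $y_k=x_k$ explicit at the outset so that this indexing matches cleanly.
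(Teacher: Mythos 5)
Your proposal is correct and follows essentially the same route as the paper: the paper justifies Theorem~\ref{thm-4} in one line by noting that $g_0(x_0)\cdot g_1(x_1)\in\delta_Q(p,p)$ via Theorem~\ref{theorem-q} applied to the DFT matrix over $\Z_p$, and then appeals (through the preceding $N$-ary theorem, stated as ``immediate by Facts 2 and 3'') to Facts~\ref{array-to-sequence}--\ref{fact-3}. Your write-up is in fact more careful than the paper's, e.g.\ in checking that the offset terms $u\cdot g_0(y_0)$, $v\cdot g_0'(y_{m-1})$ fit the $h_0,h_m$ slots of Fact~\ref{fact-3} (they are the $\delta$-quadratic term $y\cdot g_0(y')$ with the first argument frozen, equivalently the $v=0$ column of the resulting CCA) and in flagging the $y_m$ versus $y_{m-1}$ indexing slip in Fact~\ref{fact-3}.
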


The above theorem is valid, since  $g_0(x_0)\cdot g_1(x_1)\in \delta_Q(p, p)$ is a $\delta$-quadratic term if $g_0(\cdot)$ and $g_1(\cdot)$ are PPs over $\F_{p}$. However, different $\delta$-quadratic terms may result in the same sequence if their difference is a $\delta$-linear term \cite{CCA}. To avoid duplication of the sequences in Theorem \ref{thm-4}, we introduce the following PPs over finite field $\F_{p^n}$ for $p$ prime.

\begin{definition}\label{semi}
A polynomial $g(x)$ over $\F_{p^n}$ is called a {\em semi-normalized} PP if $g(x)$ is a monic PP and  $g(0)=0$. The collection of all  semi-normalized PPs over $\F_{p^n}$ are denoted by $\mathcal{S}^{(p^n)}$ in this paper.
\end{definition}

It is obvious that the number of the semi-normalized PPs  is $(p^n-2)!$. If $g(x)$ is a PP and $a\neq 0, b\neq 0, c\in\F_{p^n}$, then $g_1(x)=a\cdot g(bx+c)$ is also a PP. By suitably choosing $a, b, c$,  we can arrange to have $g_1(x)$ in {\em normalized form} so that $g_1(x)$ is monic, $g_1(0)=0$, and when the degree $d$ of $g_1(x)$ is not divisible by $p$, the coefficient of $x^{d-1}$ is 0. Normalized PPs are well-studied in the literature. For example, a list of all normalized PPs of degree at most $5$ can be found in \cite[Ch.8]{Mullen}, and all normalized PPs of degree 6 was tabulated in \cite{Winterhof}. For any semi-normalized PP $g(x)$, it is obvious that there exists normalized PP $g_1(x)$ and $c, e\in\F_{p^n}$ such that $g(x)=g_1(x+c)+e$, so all the semi-normalized PPs in Definition \ref{semi} can be obtained by the well-studied {\em normalized }PPs  over $\F_{p^n}$.

\begin{example}
For $p=5$ and $n=1$, there are two  normalized PPs from  \cite[Ch.8]{Mullen}: $x, x^3$. Then  $3!=6$ semi-normalized  PPs in $\mathcal{S}^{(5)}$ can be obtained as follows.
$$
\begin{aligned}
\mathcal{S}^{(5)}=&\{x, x^3, (x+1)^3+4, (x+2)^3+2, (x+3)^3+3, (x+4)^3+1\}\\
=&\{x, x^3, x^3+3x^2+3x, x^3+x^2+2x, x^3+4x^2+2x, x^3+2x^2+3x\}.
\end{aligned}
$$
\end{example}

It is obvious that $dg_0(x_0)g_1(x_1)\in \delta_Q(p, p)$ is a $\delta$-quadratic term for $d\in \F_p^*$ and $g_0(\cdot), g_1(\cdot)\in \mathcal{S}^{(p)}$. Moreover,
it is easy to check that the difference of $dg_0(x_0)g_1(x_1)$ and $d'g_0'(x_0)g_1'(x_1)$ cannot be a $\delta$-linear term if $(d, g_0(\cdot), g_1(\cdot)) \neq (d', g_0'(\cdot), g_1'(\cdot))$.  Then sequences in CSSs determined by DFT over $\F_p$ and the enumeration are given as follows.

\begin{corollary}\label{coro-3}
For $\forall g_k(\cdot), g_k'(\cdot)\in \mathcal{S}^{(p)}$ ($0\leq k\leq m-1$), $d_k\in \F_p^*$, $l(\bm{x})\in \delta_L(p, p)$ and permutation $\pi$, the $p$-ary sequence evaluated by
\begin{equation}\label{seq-1}
f(\bm{x})=\sum_{k=1}^{m-1}d_k\cdot g_{k}(x_{\pi(k-1)})\cdot g'_{k}(x_{\pi(k)})+l(\bm{x})
\end{equation}
lies in a CSS of size $p$. Moreover, formula (\ref{seq-1}) explicitly determines
$$\frac{1}{2}m!\left((p-1)!\right)^{m-1}\left((p-2)!\right)^{m-1}p^{m(p-1)+1}$$
distinct $p$-ary sequences.
\end{corollary}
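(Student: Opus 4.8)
The plan is to split the claim into two parts: first that the sequence evaluated by (\ref{seq-1}) lies in a CSS of size $p$, and second the enumeration of the number of distinct sequences so produced. The first part follows directly from Theorem \ref{thm-4}(1). Indeed, the form (\ref{seq-1}) is a special case of (\ref{array-1}) composed with the permutation $\pi$: each summand $d_k\cdot g_k(x_{\pi(k-1)})\cdot g'_k(x_{\pi(k)})$ is of the shape $\tilde g_k(x_{\pi(k-1)})\cdot g'_k(x_{\pi(k)})$ with $\tilde g_k(x)=d_k\cdot g_k(x)$, and since $d_k\in\F_p^*$ and $g_k\in\mathcal S^{(p)}$ is a PP, the scaled map $\tilde g_k$ is again a PP over $\F_p$. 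Thus (\ref{seq-1}) is exactly $f(\pi\cdot\bm{x})$ in the notation of Theorem \ref{thm-4} with the $u\cdot g_0$ offset absorbed (or set to $0$), so by Theorem \ref{thm-4}(1) it belongs to a CSS of size $p$. I would state this matching explicitly and then move to the counting, which is the substance of the corollary.

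For the enumeration, the strategy is to count the raw number of parameter choices and then divide out the redundancies coming from two sources: $\delta$-linear ambiguity and a symmetry of the quadratic part. First I would count the parameters freely: there are $m!$ choices of the permutation $\pi$; for each of the $m-1$ quadratic cross-terms a scalar $d_k\in\F_p^*$ (giving $(p-1)^{m-1}$) together with an ordered pair $(g_k,g'_k)\in\mathcal S^{(p)}\times\mathcal S^{(p)}$; and finally the $\delta$-linear tail $l(\bm x)\in\delta_L(p,p)$, which by (\ref{linear-1}) ranges over a set of size $p^{m(p-1)+1}$. The key reduction is the remark already invoked before the corollary: the difference of two distinct quadratic data $(d,g_0,g_1)\neq(d',g'_0,g'_1)$ is never a $\delta$-linear term, so distinct quadratic choices give sequences in genuinely different $\delta$-linear cosets, and within a fixed quadratic choice the $p^{m(p-1)+1}$ values of $l(\bm x)$ give distinct sequences. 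This shows the $\delta$-linear factor $p^{m(p-1)+1}$ enters multiplicatively and cleanly.

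The main obstacle, and the step I would treat most carefully, is the factor $\tfrac12\,(\,(p-1)!\,)^{m-1}(\,(p-2)!\,)^{m-1}$ for the quadratic data. Here I would reconcile the naive count with the stated one. Since $|\mathcal S^{(p)}|=(p-2)!$ by Definition \ref{semi}, the monic-plus-$g(0)=0$ normalization has already removed the scaling redundancy $g\mapsto a\,g(bx+c)+e$ that a general PP would carry; what remains is the single scalar $d_k\in\F_p^*$ per term, so each term contributes $(p-1)\cdot\big((p-2)!\big)^2$ ordered triples $(d_k,g_k,g'_k)$, and raising to the $(m-1)$ power gives $(p-1)^{m-1}\big((p-2)!\big)^{2(m-1)}$. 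Combining with $m!$ and rewriting $m!\,(p-1)^{m-1}=\tfrac{?}{}$ does not immediately match, so the decisive point is to identify the overcounting that produces both the $\tfrac12$ and the conversion of one $\big((p-2)!\big)$ per term into $(p-1)!$. I expect this to come from two symmetries: the reversal symmetry $\pi\mapsto$ its reverse together with swapping the roles of left and right PPs across the whole chain (the transpose-type invariance of the para-unitary construction), which produces the global factor $\tfrac12$; and a per-term absorption in which the scalar $d_k$ together with a semi-normalized PP can be recombined so that the effective count per position is $(p-1)!\cdot(p-2)!$ rather than $(p-1)\cdot(p-2)!\cdot(p-2)!$. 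I would make this precise by exhibiting an explicit bijection between equivalence classes of quadratic data and the indexing set of size $\tfrac12 m!\,((p-1)!)^{m-1}((p-2)!)^{m-1}$, verifying it respects the no-$\delta$-linear-collision property, and concluding that no further collisions occur; this bijection is where the real care is needed, and everything else is bookkeeping.
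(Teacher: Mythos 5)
Your reduction of the CSS--membership claim to Theorem \ref{thm-4}(1), by absorbing $d_k$ into the PP $\tilde g_k=d_k\cdot g_k$, is correct and is exactly how the corollary follows from the theorem. The problem is in your counting plan: it is organized around reconciling a discrepancy that does not exist. You correctly compute the raw number of quadratic data per position as $(p-1)\cdot\left((p-2)!\right)^2$ (one scalar $d_k\in\F_p^*$, two semi-normalized PPs), then claim this ``does not immediately match'' the stated per-position factor $(p-1)!\,(p-2)!$ and go hunting for a second, ``per-term absorption'' symmetry that would convert one $(p-2)!$ into $(p-1)!$. But $(p-1)!=(p-1)\cdot(p-2)!$, so
\begin{equation*}
(p-1)\cdot\left((p-2)!\right)^2=(p-1)!\,(p-2)!
\end{equation*}
identically: your raw count already equals the stated factor, and no per-term identification is needed. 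Indeed, if such an extra symmetry existed, the true number of distinct sequences would be strictly smaller than the one claimed in the corollary, so the bijection you propose to construct for it cannot exist and that part of the plan would fail. The paper's proof is precisely the computation you already did, plus the single observation you also made: among the $m!\,(p-1)^{m-1}\left((p-2)!\right)^{2(m-1)}p^{m(p-1)+1}$ parameter tuples, the only collision is the global reversal ($\pi$ reversed, $d_k\mapsto d_{m-k}$, $g_k$ and $g'_{m-k}$ interchanged, $l$ unchanged), which contributes exactly the factor $\tfrac12$.

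What remains---and what you defer with ``I would make this precise by exhibiting an explicit bijection''---is the verification that identity and reversal are the \emph{only} collisions, i.e., that $f^{(1)}(\bm{x})=f^{(2)}(\bm{x})$ forces one of the two alternatives. This is the one substantive step; the paper itself states it as a claim (its conditions (1) and (2)) without a detailed argument. It reduces to facts you already have in hand: each summand $d_k\,g_k(x_{\pi(k-1)})\,g'_k(x_{\pi(k)})$ involves the unordered pair of variables $\{x_{\pi(k-1)},x_{\pi(k)}\}$, these $m-1$ pairs form a path through the symbols $0,1,\dots,m-1$ and hence determine $\pi$ up to reversal, and after matching summands pairwise the question collapses to the two-variable fact quoted just before the corollary (the difference of two distinct quadratic data is never a $\delta$-linear term). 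So your outline is salvageable once the spurious symmetry is deleted; as written, the decisive step you planned to carry out carefully is aimed at proving something false.
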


\begin{proof}
In formula (\ref{seq-1}), there are $(p-1)$ choices of $d_k$, $(p-2)!$ choices of $g_k(\cdot)$ and  $g_k'(\cdot)$ respectively for a fixed $k$, so there are totally $(p-1)!(p-2)!$ $\delta$-quadratic terms determined by DFT matrix of order $p$.
There are $m!$ choices of permutations $\pi$, and $p^{m(p-1)+1}$ choices of function $l(\bm{x})$. Moreover, for two functions with the form (\ref{seq-1}):
\begin{equation*}
f^{(j)}(\bm{x})=\sum_{k=1}^{m-1}d^{(j)}_k\cdot g^{(j)}_{k}(x_{\pi^{(j)}(k-1)})\cdot g'^{(j)}_{k}(x_{\pi^{(j)}(k)})+l^{(j)}(\bm{x})\ \ (j=1,2),
\end{equation*}
we have $f^{(1)}(\bm{x})=f^{(2)}(\bm{x})$ if and only if
\begin{itemize}
	\item[(1)] $\pi^{(1)}=\pi^{(2)}, d^{(1)}_k=d^{(2)}_k, g^{(1)}_k=g^{(2)}_k, g'^{(1)}_k=g'^{(2)}_k, l^{(1)}(\bm{x})=l^{(2)}(\bm{x})$,  or
   \item[(2)] $\pi^{(1)}(k)=\pi^{(2)}(m-k), d^{(1)}_k=d^{(2)}_{m-k}, g^{(1)}_k=g'^{(2)}_{m-k}, g'^{(1)}_k=g^{(2)}_{m-k}, l^{(1)}(\bm{x})=l^{(2)}(\bm{x})$.
\end{itemize}
The proof is completed.
\end{proof}

We firstly give an example for $p=2$ to illustrate the constructions in Theorem \ref{thm-4} and Corollary \ref{coro-3}.
\begin{example}\label{exam-2}
For $p=2$, $\mathcal{S}^{(2)}$ contains only one PP: $g(x)=x$. The functions in Corollary \ref{coro-3} can be expressed by
$$
f(\bm{x})=\sum_{k=1}^{m-1}x_{\pi(k-1)}x_{\pi(k)}+\sum_{k=0}^{m-1}c_{k}x_k+c',
$$
for $c_k, c'\in \F_2$. From Theorem \ref{thm-4}, the sequences evaluated by
$$\left\{
\begin{aligned}
&f(\bm{x}),\\
&f(\bm{x})+x_{\pi(0)}
\end{aligned}\right.
$$
form a binary Golay complementary pair.
\end{example}

The results in Example \ref{exam-2} coincide with the known binary standard Golay sequences, which have been well studied in  \cite{DavisJedwab99}.
In the rest of this subsection, we discuss the case for odd prime $p$. Any $p$-ary function from $\F_p^m$ to $\F_p$ can be treated as a vector  over $\F_p$ of dimension $p^m$. Then $\delta_L(p, p)$ is a subspace (or a code) of dimension $m(p-1)+1$. The collection of the sequences in Corollary \ref{coro-3} is actually the union of the cosets of $\delta_L(p, p)$ with coset leaders:
\begin{equation}\label{leader-1}
\sum_{k=1}^{m-1}d_k\cdot g_{k}(x_{\pi(k-1)})\cdot g'_{k}(x_{\pi(k)}).
\end{equation}
On the other hand, every sequences in Corollary \ref{coro-3}  lies in GRM$_p$($2(p-2), m$) \cite{Kasami68}, since the degree of the PPs over $\F_p$ is no more than $p-2$. The Hamming distance of GRM$_p$($r, m$) has been shown in \cite{Kasami68}. Then a lower bound of the Hamming distance of the union of these cosets is obtained immediately.

\begin{corollary}\label{coro-4}
The Hamming distance of any two distinct sequences in Corollary \ref{coro-3} is at leat $3p^{m-2}$ for odd prime $p$.
\end{corollary}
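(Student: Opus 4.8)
The plan is to combine the two facts that the excerpt has already assembled: first, that every sequence produced by Corollary \ref{coro-3} lies in the code GRM$_p$($2(p-2), m$), and second, the exact minimum-distance formula for GRM$_p$($r,m$) recalled in Subsection 2.3 from \cite{Kasami68}. The distance between two distinct sequences in Corollary \ref{coro-3} is the Hamming weight of their difference. Since each such sequence is a codeword of the linear code GRM$_p$($2(p-2),m$), and since this code is linear, the difference of any two of its codewords is again a codeword of the same code. Provided this difference is nonzero, its weight is at least the minimum distance of GRM$_p$($2(p-2),m$). So the entire statement reduces to evaluating that minimum distance at $r=2(p-2)$.

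Concretely, I would first argue membership in the code. Each coset leader (\ref{leader-1}) is a sum of terms $d_k\, g_k(x_{\pi(k-1)})\, g'_k(x_{\pi(k)})$, a product of two PPs in distinct variables; since the semi-normalized PPs over $\F_p$ have degree at most $p-2$ (degree $p-1$ is impossible for a PP over $\F_p$, and after normalization the degree is bounded by $p-2$), each product has total degree at most $2(p-2)$. Adding the $\delta$-linear part $l(\bm{x})\in\delta_L(p,p)$, whose terms have degree at most $p-1<2(p-2)$ for $p\ge 3$, keeps the total degree at most $2(p-2)$. Hence every sequence in Corollary \ref{coro-3} is a $p$-ary function of degree at most $2(p-2)$, i.e. a codeword of GRM$_p$($2(p-2),m$), exactly as the paragraph preceding the corollary asserts.

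Next I would apply the distance formula. Writing $r=2(p-2)=2p-4$, I compute $m(p-1)-r=m(p-1)-2p+4$ and divide by $p-1$ to obtain the quotient $Q$ and remainder $R$ from the statement $d=(R+1)p^{Q}$. The arithmetic gives $m(p-1)-(2p-4)=(m-2)(p-1)+(p-3)$, so since $0\le p-3\le p-2$ for $p\ge 3$ the remainder is $R=p-3$ and the quotient is $Q=m-2$. Substituting yields $d=(R+1)p^{Q}=(p-2)p^{m-2}$. This, however, gives $(p-2)p^{m-2}$, not the claimed $3p^{m-2}$, so I would expect the actual bound to be stated as $(p-2)p^{m-2}$; the value $3p^{m-2}$ coincides with it only for $p=5$, and more generally one has $(p-2)p^{m-2}\ge 3p^{m-2}$ for $p\ge 5$, which still certifies the error-correction claim in the introduction.

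The main obstacle, and the step I would scrutinize most carefully, is precisely this arithmetic reconciliation: confirming the degree bound $2(p-2)$ is tight (i.e. that the semi-normalized PPs genuinely have degree $\le p-2$ and that the product attains the relevant degree so that the difference sequences remain inside GRM$_p$($2(p-2),m$) and not a smaller code), and then verifying that the claimed $3p^{m-2}$ is to be read as the minimum-distance value $(p-2)p^{m-2}$ of GRM$_p$($2(p-2),m$) — or, if the authors intend a uniform lower bound valid for all odd $p\ge 3$, checking the boundary case $p=3$ separately, where $2(p-2)=2$, $R=0$, $Q=m-2$, and $d=p^{m-2}=3^{m-2}=3^{m-2}$, matching $3p^{m-2}/p=p^{m-2}$ rather than $3p^{m-2}$. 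I would therefore treat the distance computation as the crux and state the bound in the sharp form $(p-2)p^{m-2}$, noting the special small-prime cases, rather than relying on the looser numerical constant.
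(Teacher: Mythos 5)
Your overall route is exactly the paper's: show each sequence of Corollary \ref{coro-3} lies in the linear code GRM$_p$($2(p-2),m$), use linearity so that the difference of two distinct sequences is a nonzero codeword of that code, and evaluate Kasami's minimum-distance formula $(R+1)p^{Q}$ at $r=2(p-2)$. Your membership argument is sound. The genuine gap is in the division arithmetic, which you yourself single out as the crux: the identity you use,
\[
m(p-1)-(2p-4)=(m-2)(p-1)+(p-3),
\]
is false. Since $-(2p-4)=-2(p-1)+2$, the correct decomposition is
\[
m(p-1)-2(p-2)=(m-2)(p-1)+2,
\]
and your expression agrees with this one only when $p=5$ (which is exactly why your bound and the paper's coincide there). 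Hence for $p>3$ the remainder is $R=2$ and the quotient is $Q=m-2$, so the minimum distance is $(R+1)p^{Q}=3p^{m-2}$ --- precisely the bound stated in the corollary, not $(p-2)p^{m-2}$. Your treatment of $p=3$ contains a second slip: there $m(p-1)-r=2m-2=2(m-1)$, so $Q=m-1$ and $R=0$, giving $(R+1)p^{Q}=3^{m-1}=3\cdot 3^{m-2}=3p^{m-2}$ again; your value $Q=m-2$ is off by one. So the bound $3p^{m-2}$ is uniformly correct for every odd prime, and there is nothing to ``reconcile.''

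Moreover, your proposed sharper value $(p-2)p^{m-2}$ is not a safe restatement --- it is false as a claim about the code, so a proof resting on it is broken even though its conclusion would numerically imply the corollary for $p\geq 5$. Concretely, for $p=7$, $m=2$, the code GRM$_7$($10,2$) contains the codeword $\prod_{a\in\F_7^{*}}(x_0-a)\cdot\prod_{b\in B}(x_1-b)$ with $|B|=4$, which has degree $6+4=10$ and Hamming weight $1\cdot 3=3$; its minimum distance is therefore $3=3p^{m-2}$, not $5=(p-2)p^{m-2}$. (Incidentally, the paper's own proof has a harmless typo in the $p=3$ case --- it writes $R=1$ where the correct value is $R=0$ --- but its stated conclusion $3p^{m-2}$ is the right one.)
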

\begin{proof}
It was shown in \cite[Theorem 5]{Kasami68} that  GRM$_p$($r, m$) has minimum Hamming distance $(R+1)\cdot p^{Q}$, where $R$ is the remainder and $Q$ the quotient resulting from dividing $m(p-1)-r$ by $p-1$. If we set $r=2(p-2)$, we have $Q=m-2$  and $R=2$ for $p>3$, and we have $Q=m-1$ and $R=1$ for $p=3$. Both two cases result in minimum Hamming distance $3p^{m-2}$.
\end{proof}

We continue to give examples for $p=3$ and $5$.
\begin{example}\label{exam-3}
For $p=3$,  $\mathcal{S}^{(3)}$ contains only one PP: $g(x)=x$. Then the functions in Corollary \ref{coro-3} can be expressed by
$$
f(\bm{x})=\sum_{k=1}^{m-1}d_kx_{\pi(k-1)}x_{\pi(k)}+\sum_{k=0}^{m-1}c_{k,2}x_{k}^2+\sum_{k=0}^{m-1}c_{k,1}x_{k}+c',
$$
for  $d_k \in \F_3^*$ and $c_{k,2}, c_{k,1}, c'\in \F_3$. From Theorem \ref{thm-4},  the sequences evaluated by
$$\left\{
\begin{aligned}
& f(\bm{x}),\\
&f(\bm{x})+x_{\pi(0)},\\
&f(\bm{x})+2x_{\pi(0)}
\end{aligned}\right.
$$
form a ternary CSS of size 3.  The sequences evaluated by
$$ f(\bm{x})\cdot \bm{J_3}+
          x_{\pi(0)} \cdot\left(
  \begin{array}{ccccc}
    0 & 0 & 0  \\
    1 & 1 & 1  \\
    2 & 2 & 2  \\
  \end{array}
\right)+x_{\pi(m-1)} \cdot\left(
  \begin{array}{ccccc}
    0 & 1 & 2 \\
    0 & 1 & 2  \\
    0 & 1 & 2  \\
  \end{array}
\right)
$$
form a CCC, where $\bm{J}_N$ is the all $1$ matrix of order $N$. These results coincide with ternary case in \cite[Constuction 2]{CCA}.
\end{example}

\begin{example}\label{exam-4}
For $p=5$, $\mathcal{S}^{(5)}$ has been given in Example 1.
The functions in Corollary \ref{coro-3} can be expressed by
$$
f(\bm{x})=\sum_{k=1}^{m-1}d_k\cdot g_{k}(x_{\pi(k-1)})\cdot g'_{k}(x_{\pi(k)})+\sum_{k=0}^{m-1}\left (c_{k,4}x_{k}^4+c_{k,3}x_{k}^3+c_{k,2}x_{k}^2+c_{k,1}x_{k}\right )+c',
$$
where $g_k(\cdot), g'_k(\cdot)\in \mathcal{S}^{(5)}$, $d_k\in \F_5^*$ and $c_{k,i}, c'\in \F_5$.

For $\forall g_0(\cdot)\in \mathcal{S}^{(5)}$, the sequences evaluated by
$$\left\{
\begin{aligned}
&f(\bm{x}),\\
&f(\bm{x})+g_0(x_{\pi(0)}),\\
&f(\bm{x})+2g_0(x_{\pi(0)}),\\
&f(\bm{x})+3g_0(x_{\pi(0)}),\\
&f(\bm{x})+4g_0(x_{\pi(0)})
\end{aligned}\right.
$$
form a quinary CSS of size 5.

For $\forall g_0(\cdot), g'_0(\cdot)\in \mathcal{S}^{(5)}$,  the sequences evaluated by
$$ f(\bm{x})\cdot \bm{J_5}+g_0(x_{\pi(0)}) \cdot\left(
  \begin{array}{ccccc}
    0 & 0 & 0 & 0 & 0 \\
    1 & 1 & 1 & 1 & 1 \\
    2 & 2 & 2 & 2 & 2 \\
    3 & 3 & 3 & 3 & 3 \\
    4 & 4 & 4 & 4 & 4 \\
  \end{array}
\right)+g'_{0}(x_{\pi(m-1)}) \cdot\left(
  \begin{array}{ccccc}
    0 & 1 & 2 & 3 & 4 \\
    0 & 1 & 2 & 3 & 4 \\
    0 & 1 & 2 & 3 & 4 \\
    0 & 1 & 2 & 3 & 4 \\
    0 & 1 & 2 & 3 & 4 \\
  \end{array}
\right)
$$
form a CCC.
\end{example}

\begin{table*}
\caption{The Comparisons of Code Rate With Konwn Results}
		\centering
		\begin{tabular}{|c|c|c|c|c|}
				\hline
				Codes
				&PMEPR at most
				&$\#$ of subcarriers
				& info. rate
				&code rate \\
				\hline\hline
				Binary \cite{Paterson00}
				&4
				&128
				&0.180
				&0.180 \\
                \hline
                Quaternary \cite{Paterson00}
				&4
				&128
				&0.296
				&0.148 \\
                \hline
                Octary \cite{Paterson00}
				&4
				&128
				&0.405
				&0.135 \\
                \hline
                Quinary in this paper
				&5
				&125
				&0.369
				&0.159 \\
                \hline
                Binary \cite{Paterson00}
				&8
				&128
				&0.172
				&0.172 \\
                \hline
				
			\end{tabular}
	\end{table*}

A table of parameters including PMEPR bound, code rate, information rate for quinary sequences of length 125  given in Example \ref{exam-4}, and for binary, quaternary and octary sequences of length $128$ proposed in \cite{Paterson00} is shown in Table 1. Note that it is difficult to do a normalized comparison with known results, since the length and the bound of the PMEPR of the previous  results are alway power of 2.

\subsection{Constructions from Bijective GBFs}

In this subsection, we will set $p=2$ and $N=2^n=q$. In this case, the explicit form of the $\delta$-linear terms $\delta_L(q=2^n, p^n=2^n)$ have been  shown in Corollary \ref{coro-1}. For $\delta$-quadratic  terms,  any  permutation function $g(y)$ over $\Z_{2^n}$ can be realized a bijective GBF $g(\bm{x})$ from $\F_2^n$ to $\Z_{2^n}$. Then the ANF of the functions in CCA and CAS can be given by Theorem 3.

With the same arguments in the previous subsection to avoid the duplication, we define a subset $\mathcal{S}$ of the bijective GBFs from $\F_2^n$ to $\Z_{2^n}$ such that
\begin{itemize}
	\item[(1)] $g(\bm{x}=\bm{0})=0$ for $\forall g(\bm{x})\in \mathcal{S}$,
   \item[(2)] $g_1(\bm{x})\neq d\cdot g_2(\bm{x})$ for $\forall g_1(\bm{x}), g_2(\bm{x})\in \mathcal{S}$, and $(d, 2)=1$.
\end{itemize}
Then we have $dg(\bm{x_0})g'(\bm{x_1})\in\delta_Q(2^n, 2^n)$ for $g(\bm{x_0}), g'(\bm{x_1})\in \mathcal{S}$, $d\in\Z_{2^n}$ and $d$ odd. There are totally
$\frac{(2^n-1)!}{2^{n-1}}$ bijective GBFs in the set $\mathcal{S}$, which leads to $\frac{((2^n-1)!)^2}{2^{n-1}}$
$\delta$-quadratic terms determined by DFT matrices of order $2^n$. We give an example for $n=2$ to illustrate it.

\begin{example}\label{exam-5}
There are 3 bijective GBFs from $\F_2^2$ to $\Z_{4}$ in the set $\mathcal{S}$, which can be explicitly given by
$$\left\{
\begin{aligned}
&g_1(x_0, x_1)=x_0+2x_1,\\
&g_2(x_0, x_1)=2x_0+x_1,\\
&g_3(x_0, x_1)=2x_0x_1+x_0+3x_1.\\
\end{aligned}\right.
$$
Then there are 18 $\delta$-quadratic terms determined by DFT matrices of order $4$, which can be presented by
$$dg(x_0, x_1)g'(x_2,x_3) \ \mbox{for}\ g, g'\in \{g_1, g_2, g_3\}, d=1,3.$$
The collection of these $\delta$-quadratic terms is denoted by $\delta_Q^{(1)}(4, 4)$. We will continue the study of the quadratic terms $\delta_Q(4, 4)$ in Example 7 by another BH matrix.
\end{example}

\section{Constructions from Hadamard Matrices over Finite Fields}

In this section, we will set $p$  prime. We first introduce some notations in this and the next sections.

Recall that $(x_0,x_1,\cdots,x_{n-1})$ be the $p$-ary expansion of integer $y$, i.e, $y=\sum_{j=0}^{n-1}x_jp^j$. Define the mapping $\gamma$ from $\Z_{p^n}$ to $
\F_{p^n}$ by $$\gamma(y)=\sum_{j=0}^{n-1}x_j\alpha_j,$$
where $\{\alpha_0, \alpha_1, \cdots \alpha_{n-1}\}$ is a  basis of finite field $\F_{p^n}$ over $\F_p$. We will directly use $y\in \F_{p^n}$ instead of $\gamma(y)\in \F_{p^n}$ if the context is clear in this and the next sections. Then we have variables $y_k=\sum_{j=0}^{n-1}\alpha_{j}x_{kn+j}\in \F_{p^n}$ where $\bm{x}_k= (x_{kn}, x_{kn+1}, \cdots, x_{kn+n-1})\in \F_p^n$.

The trace function from $\F_{p^n}$ to $\F_{p}$ is defined by
$$Tr(\alpha)=\alpha+\alpha^p+\alpha^{p^2}+\cdots+\alpha^{p^{n-1}}.$$

\subsection{Constructions by the Trace Function and PPs over $\F_{2^n}$}

In this subsection, we will set $p=2$ and $q$  even. From Corollary 1, the set of the $\delta$-linear terms  is given by
 \begin{equation}\label{linear-2}
\delta_L(q, p^n=2^n)=\left\{ \sum_{k=0}^{m-1}\sum_{i=1}^{2^n-1}\left(c_{k,i}\cdot\prod_{j=0}^{n-1}x_{kn+j}^{i_j}\right)+c' \bigg| c_{k,i}, c'\in \Z_q\right\},
\end{equation}
where  $(i_0,i_1,\cdots,i_{n-1})$ is the binary expansion of integer $i$.

Let $H$ be a Hadamard matrix with entry $H_{u, v}=(-1)^{Tr(u\cdot v)}=w^{\frac{q}{2}Tr(u\cdot v)}$ for $u, v\in \F_{2^n}$. Then the entry of its  phase matrix is given by $\widetilde{H}_{u, v}=\frac{q}{2}Tr(u\cdot v)$. And we have
$$\frac{q}{2}Tr(g(y_0)g'(y_1))\in \delta_Q(q, 2^n)$$ for arbitrary  PPs $g(\cdot), g'(\cdot)$  over $\F_{2^n}$. These terms are called  the  $\delta$-quadratic terms determined by Hadamard matrix over $\F_{2^n}$.

\begin{remark}
$Tr(g(y_0)g'(y_1))$ can be expressed by a Boolean  function from  $\F_2^{2n}$ to $\F_2$  with variables $(\bm{x}_{0}, \bm{x}_{1})$. For $n\geq 2$, since the degree of the PPs $g(y)$ over $\F_2^{2n}$ must be not more than $2^n-2$ with respected to variables $y$, the degree of the $Tr(g(y_0)g'(y_1))$ must be no more than $2n-2$ with respected to Boolean variables $(\bm{x}_{0}, \bm{x}_{1})$.
\end{remark}

\begin{theorem}\label{thm-5}
Let $g_k(\cdot), g_k'(\cdot)$ are arbitrary  PPs over $\F_{2^n}$ for  $0\leq k\leq m$, $\pi$  arbitrary  permutation of symbols $\{0, 1, \cdots mn-1\}$, and $\forall l(\bm{x})\in \delta_L(q, 2^n)$. For any $q$-ary GBF $f(\bm{x})$ from $\F_{2}^{mn}$ to $\Z_q$ with the form
\begin{equation}\label{array-2}
f(\bm{x})=\frac{q}{2}\sum_{k=1}^{m-1}Tr(g_{k}(y_{k-1})\cdot  g'_{k}(y_{k}))+l(\bm{x}),
\end{equation}
 we have the following results.
\begin{itemize}
	\item[(1)] The following $q$-ary  GBFs form a CAS of size $2^n$:
\begin{equation}\label{CSS-2}
f_u(\bm{x})=f(\bm{x})+\frac{q}{2}Tr(u\cdot g_{0}(y_{0})) \ \mbox{for} \ u\in \F_{2^n}.
\end{equation}
Consequently, the $q$-ary sequences  evaluated by GBFs $\{f_{u}(\pi\cdot \bm{x})\mid u\in \F_{2^n}\}$ form a CSS of size $2^n$.
   \item[(2)] The following $q$-ary  GBFs form a CCA:
\begin{equation}\label{CCC-2}
f_{u,v}(\bm{x})=f(\bm{x})+\frac{q}{2}Tr(u\cdot g_{0}(y_{0}))+\frac{q}{2}Tr(v\cdot g'_{0}(y_{m-1})) \ \mbox{for} \ u, v\in \F_{2^n}.
\end{equation}
Consequently, the $q$-ary sequences  evaluated by GBFs $\{f_{u, v}(\pi\cdot \bm{x})\mid u, v\in \F_{2^n}\}$ form a CCC.
\end{itemize}
\end{theorem}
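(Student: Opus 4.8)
The plan is to reduce Theorem~\ref{thm-5} to the already-established machinery of Facts~\ref{fact-2} and \ref{fact-3} together with Theorem~\ref{theorem-q}, by verifying that each piece of the function $f(\bm{x})$ and of the coset offsets is indeed a legitimate $\delta$-quadratic term for the particular BH matrix $H_{u,v}=\omega^{\frac{q}{2}Tr(u\cdot v)}$ over $\F_{2^n}$. First I would observe that this $H$ is a genuine Butson-type Hadamard matrix in $H(q,2^n)$: its entries are $\pm 1$, hence $q$th roots of unity since $q$ is even, and the Hadamard property $H\cdot H^{\dagger}=2^n\cdot I$ follows from the standard orthogonality of additive characters of $\F_{2^n}$, namely $\sum_{w\in\F_{2^n}}(-1)^{Tr(w\cdot t)}=0$ for $t\neq 0$. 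Its phase matrix is exactly $\widetilde{H}_{u,v}=\frac{q}{2}Tr(u\cdot v)$.

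Next I would apply Theorem~\ref{theorem-q} with $h(u,v)=\frac{q}{2}Tr(u\cdot v)$ and with the permutation functions taken to be the PPs $g_k(\cdot),g'_k(\cdot)$ over $\F_{2^n}$ (a permutation polynomial is by definition a bijection of $\F_{2^n}$, hence a permutation function on the $p^n=2^n$ symbols once we identify $\Z_{2^n}$ with $\F_{2^n}$ through the map $\gamma$ introduced in Section~5). This immediately gives $\frac{q}{2}Tr\!\left(g_k(y_{k-1})\cdot g'_k(y_k)\right)\in\delta_Q(q,2^n)$ for each $k$, and likewise $\frac{q}{2}Tr(u\cdot g_0(y_0))$ and $\frac{q}{2}Tr(v\cdot g'_0(y_{m-1}))$ are $\delta$-quadratic terms (here $u,v$ play the role of the fixed row/column index, and $g_0,g'_0$ supply the permutation in the remaining variable). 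Thus $f(\bm{x})$ in \eqref{array-2} has precisely the shape $\sum_{k=1}^{m-1}h_k(y_{k-1},y_k)+l(\bm{y})$ required by Fact~\ref{fact-2}, with each $h_k\in\delta_Q(q,2^n)$ and $l(\bm{x})\in\delta_L(q,2^n)$.

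With that identification in hand, part~(1) and part~(2) follow directly from Fact~\ref{fact-3}: setting $h_0(u,y_0)=\frac{q}{2}Tr(u\cdot g_0(y_0))$ and $h_m(y_{m-1},v)=\frac{q}{2}Tr(v\cdot g'_0(y_{m-1}))$ — both of which are $\delta$-quadratic terms by the previous paragraph — the arrays $f_u(\bm{x})$ form a CAS of size $2^n$ and the arrays $f_{u,v}(\bm{x})$ form a CCA. Finally, to pass from arrays to sequences I would invoke Fact~\ref{array-to-sequence}: applying the permutation $\pi$ of $\{0,1,\dots,mn-1\}$ to the variables and evaluating, the functions $\{f_u(\pi\cdot\bm{x})\}$ yield a CSS of size $2^n$ (each row/column of the CCA becomes a CSS) and $\{f_{u,v}(\pi\cdot\bm{x})\}$ yield a CCC.

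The main obstacle I anticipate is not in the CSS/CCC logic, which is essentially bookkeeping on top of Facts~\ref{array-to-sequence}--\ref{fact-3}, but in cleanly justifying the two identifications that glue the finite-field picture to the $\Z_{p^n}$ indexing of the $\delta$-terms: namely that $Tr(u\cdot v)$ really produces an $H\in H(q,2^n)$ whose phase matrix matches the hypothesis of Theorem~\ref{theorem-q} after transport along $\gamma$, and that composing a PP with this bilinear trace form still lands in $\delta_Q(q,2^n)$ even though Theorem~\ref{theorem-q} was stated for a single BH matrix with permutations applied to \emph{both} arguments. The offset terms $\frac{q}{2}Tr(u\cdot g_0(y_0))$ need the extra remark that fixing the first argument to $u$ and permuting only the second is still covered, which I would handle by noting that the identity permutation is an admissible choice of $g(\cdot)$ in Theorem~\ref{theorem-q}.
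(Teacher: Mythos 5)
Your proposal is correct and follows exactly the route the paper intends: Theorem~\ref{thm-5} is stated there without a separate proof, as an immediate consequence of Theorem~\ref{theorem-q} applied to the Hadamard matrix $H_{u,v}=\omega^{\frac{q}{2}Tr(u\cdot v)}$ over $\F_{2^n}$ (which yields $\frac{q}{2}Tr(g(y_0)\cdot g'(y_1))\in\delta_Q(q,2^n)$ for arbitrary PPs $g,g'$), combined with Facts~\ref{fact-2}, \ref{fact-3} and \ref{array-to-sequence}. Your additional care about transporting along $\gamma$ between $\Z_{2^n}$ and $\F_{2^n}$, and about taking the identity permutation in Theorem~\ref{theorem-q} to cover the offset terms $\frac{q}{2}Tr(u\cdot g_0(y_0))$, merely makes explicit what the paper leaves implicit.
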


With the same arguments in the subsection 4.1 to avoid the duplication, recall the the semi-normalized PPs in Definition \ref{semi}.

\begin{corollary}\label{coro-5}
For $\forall g_k(\cdot), g_k'(\cdot)\in \mathcal{S}^{(2^n)}$ ($0\leq k\leq m-1$), $d_k\in \F_{2^n}^*$,  $l(\bm{x})\in \delta_L(q, 2^n)$ and permutation $\pi$, the $q$-ary sequence evaluated by GBF $f(\pi \cdot \bm{x})$, where
\begin{equation}\label{seq-2}
f(\bm{x})=\frac{q}{2}\sum_{k=1}^{m-1}Tr(d_k\cdot g_{k}(y_{k-1})\cdot  g'_{k}(y_{k}))+l(\bm{x}),
\end{equation}
lies in a CSS of size $2^n$.
\end{corollary}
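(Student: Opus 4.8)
The plan is to read Corollary~\ref{coro-5} as an immediate specialization of Theorem~\ref{thm-5}, the only substantive step being to absorb each scalar $d_k$ into a permutation polynomial. Note that Theorem~\ref{thm-5} already allows $g_k(\cdot)$ and $g_k'(\cdot)$ to be \emph{arbitrary} PPs over $\F_{2^n}$, whereas the corollary restricts them to the semi-normalized family $\mathcal{S}^{(2^n)}$ of Definition~\ref{semi} and inserts a factor $d_k\in\F_{2^n}^*$. As in Subsection~4.1, this restriction together with the $d_k$ is present only to fix a non-redundant parametrization of the $\delta$-quadratic terms; it is not needed for the CSS property, which is what the corollary asserts.

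First I would check that for each $k$ the map $\widehat{g}_k(y):=d_k\cdot g_k(y)$ is again a PP over $\F_{2^n}$: since $g_k\in\mathcal{S}^{(2^n)}$ is by definition a PP and multiplication by the nonzero constant $d_k$ is a bijection of $\F_{2^n}$, the composite $y\mapsto d_k\cdot g_k(y)$ is a bijection, hence a PP. It need no longer be semi-normalized, but that is harmless, because Theorem~\ref{thm-5} imposes no normalization on its permutation polynomials. Then, using $Tr(d_k\cdot g_k(y_{k-1})\cdot g_k'(y_k))=Tr(\widehat{g}_k(y_{k-1})\cdot g_k'(y_k))$, the function $f(\bm{x})$ of (\ref{seq-2}) takes the form
\begin{equation*}
f(\bm{x})=\frac{q}{2}\sum_{k=1}^{m-1}Tr\bigl(\widehat{g}_{k}(y_{k-1})\cdot g'_{k}(y_{k})\bigr)+l(\bm{x}),
\end{equation*}
which is exactly the shape (\ref{array-2}) required by Theorem~\ref{thm-5}, with the PPs $\widehat{g}_k$ and $g_k'$ and the same $l(\bm{x})\in\delta_L(q,2^n)$.

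Finally I would invoke Theorem~\ref{thm-5}(1) with these data: the GBFs $f_u(\bm{x})=f(\bm{x})+\frac{q}{2}Tr(u\cdot g_0(y_0))$ for $u\in\F_{2^n}$ form a CAS of size $2^n$, and the sequences evaluated by $\{f_u(\pi\cdot\bm{x})\mid u\in\F_{2^n}\}$ form a CSS of size $2^n$. Specializing to $u=0$ exhibits $f(\pi\cdot\bm{x})$ as a member of this set, which is precisely the claim. I do not expect any genuine obstacle: the one point worth stating explicitly is that absorbing $d_k$ keeps us inside the hypotheses of Theorem~\ref{thm-5} (closure of PPs under scaling by a nonzero element), after which the corollary is a direct reading of that theorem, and all the real content sits in Theorem~\ref{thm-5} rather than here.
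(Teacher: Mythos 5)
Your proposal is correct and matches the paper's (largely implicit) argument: the paper presents Corollary~\ref{coro-5} as a direct consequence of Theorem~\ref{thm-5} ``with the same arguments in Subsection~4.1,'' i.e.\ precisely the observation that $d_k\cdot g_k(\cdot)$ is again a PP over $\F_{2^n}$, so that (\ref{seq-2}) is an instance of (\ref{array-2}) and the sequence $f(\pi\cdot\bm{x})$ sits in the CSS of Theorem~\ref{thm-5}(1) as the member with $u=0$. Nothing further is needed, since the semi-normalization and the factors $d_k$ serve only to make the parametrization non-redundant, not to establish the CSS property.
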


\begin{remark}
The collection of the sequences in Corollary \ref{coro-4} is the union of the cosets of the first order Reed-Muller code RM$_q$($1, mn$). Moreover, every sequences in Corollary \ref{coro-4} lies in RM$_q$($2n-2), mn$) for $n\geq 2$.
\end{remark}

We give examples for $n=1$ and $2$ to illustrate the constructions in Theorem \ref{thm-5} and Corollary \ref{coro-5}.

\begin{example}\label{exam-6}
For $n=1$, $\mathcal{S}^{(2)}$ contains only one PP: $g(x)=x$. The functions in Corollary \ref{coro-5} can be expressed by
$$
f(\bm{x})=\frac{q}{2}\sum_{k=1}^{m-1}x_{\pi(k-1)}x_{\pi(k)}+\sum_{k=0}^{m-1}c_{k}x_k+c'
$$
for $c_k, c'\in \Z_q$. From Theorem \ref{thm-5}, the sequences evaluated by
$$\left\{
\begin{aligned}
&f(\bm{x}),\\
&f(\bm{x})+\frac{q}{2}x_{\pi(0)}
\end{aligned}\right.
$$
form a $q$-ary Golay complementary pair. These results coincide with the known $q$-ary standard Golay sequences \cite{DavisJedwab99}. The sequences evaluated by
$$f(\bm{x})\cdot \bm{J_2}+\frac{q}{2}
          \begin{bmatrix}
            0 & x_{\pi(m-1)} \\
            x_{\pi(0)} & x_{\pi(0)}+x_{\pi(m-1)}
          \end{bmatrix}
          $$
form a CCC.
\end{example}

\begin{example}\label{exam-7}
For $n=2$, $q=4$ and $\F_4=\{0, 1, \alpha, \alpha^2=\alpha+1\}$, we have $\{1, \alpha\}$ is a basis of $\F_{2^2}$ over $\F_2$. $\mathcal{S}^{(2^2)}$ contains  two semi-normalized PPs: $g(y)=y$ and $g(y)=y^2$.

We first study the Boolean function $Tr(d\cdot g(y_{0})\cdot  g'(y_{1}))$ where $y_0=x_0+\alpha x_1$, $y_1=x_2+\alpha x_3$, $d\in \F_{2^2}^*$ and $g(\cdot), g'(\cdot)\in\mathcal{S}^{(2^2)}$. Note that $Tr(\beta)=Tr(\beta^2)$ for $\forall \beta \in \F_{2^2}$, then $\delta$-quadratic terms determined by Hadamard Matrix over $\F_{2^2}$ are given by
$$\delta_Q^{(2)}(4, 4)=\{2Tr(y_0y_1), 2Tr(\alpha y_0y_1), 2Tr(\alpha^2 y_0y_1), 2Tr(y_0y_1^2), 2Tr(\alpha y_0y_1^2), 2Tr(\alpha^2 y_0y_1^2)\}.$$
Here we take $Tr(\alpha y_0y_1)$ and $Tr(\alpha^2 y_0y_1^2)$ as two examples to show the calculation process:
\begin{eqnarray*}
Tr(\alpha y_0y_1)&=&Tr(\alpha (x_0+\alpha x_1)(x_2+\alpha x_3))\\
&=&Tr(\alpha x_0x_2+\alpha^2 x_0x_3+ \alpha^2 x_1x_2 + x_1x_3)\\
&=&Tr(\alpha)\cdot x_0x_2+Tr(\alpha^2)\cdot x_0x_3+ Tr(\alpha^2)\cdot x_1x_2 + Tr(1)\cdot x_1x_3\\
&=&x_0x_2+x_0x_3+ x_1x_2,
\end{eqnarray*}
and
\begin{eqnarray*}
Tr(\alpha^2 y_0y_1^2)&=&Tr(\alpha^2 (x_0+\alpha x_1)(x_2+\alpha x_3)^2)\\
&=&Tr(\alpha^2 (x_0+\alpha x_1)(x_2+\alpha^2 x_3))\\
&=&Tr(\alpha^2 x_0x_2+\alpha x_0x_3+ x_1x_2 + \alpha^2x_1x_3)\\
&=&Tr(\alpha^2)\cdot x_0x_2+Tr(\alpha)\cdot x_0x_3+ Tr(1)\cdot x_1x_2 + Tr(\alpha^2)\cdot x_1x_3\\
&=&x_0x_2+x_0x_3+ x_1x_3,
\end{eqnarray*}
By a similar process, we have
$$\left\{
\begin{aligned}
&Tr(y_0y_1)=x_1x_3+x_0x_3+x_1x_2,\\
&Tr(\alpha^2 y_0y_1)=Tr(y_0y_1)+Tr(\alpha y_0y_1)=x_1x_3+x_0x_2,\\
&Tr(y_0y_1^2)=x_0x_3+x_1x_2,\\
&Tr(\alpha y_0y_1^2)=Tr(y_0y_1^2)+Tr(\alpha^2 y_0y_1^2)=x_0x_2+x_1x_3+x_1x_2.
\end{aligned}\right.
$$
The functions in Corollary \ref{coro-5} can be expressed by
\begin{equation}\label{for-example}
f(\bm{x})=\sum_{k=1}^{m-1}h_k(x_{\pi(2 k-2)},x_{\pi(2 k-1)},x_{\pi(2 k)},x_{\pi(2 k+1)} )+\sum_{k=0}^{m-1}e_kx_{\pi(2k)}x_{\pi(2k+1)} +\sum_{k=0}^{2m-1}c_kx_{k}+c'
\end{equation}
for  $h_k(\cdot, \cdot)\in \delta_Q^{(2)}(4, 4)$ and  $e_k, c_k, c'\in \Z_4$.

By applying  Theorems \ref{thm-5},  the sequences evaluated by
$$\left\{
\begin{aligned}
&f(\bm{x}),\\
&f(\bm{x})+2x_{\pi(0)},\\
&f(\bm{x})+2x_{\pi(1)},\\
&f(\bm{x})+2x_{\pi(0)}+2x_{\pi(1)}
\end{aligned}\right.
$$
form a quaternary CSS of size 4.

Recall the $\delta$-quadratic terms determined by DFT matrix of order $4$, which have been shown in Example \ref{exam-5}. There are two equivalent classes of the quaternary BH matrix of order 4. One class is equivalent to the DFT matrix of order $4$, and the other is equivalent to the Hadamard Matrix over $\F_{2^2}$. Then we have
$$\delta_Q(4, 4)=\delta_Q^{(1)}(4, 4)\cup\delta_Q^{(2)}(4, 4),$$
which exactly contains $18+6=24$ $\delta$-quadratic terms. Then the function derived from $\delta_Q(4, 4)$ and $\delta_Q(4, 4)$ can be expressed by $f(\bm{x})$ in the same form (\ref{for-example}) for  $h_k\in \delta_Q(4, 4)$ and  $e_k, c_k, c'\in \Z_4$.

Moreover, The sequences evaluated by
$$\left\{
\begin{aligned}
&f,\\
&f+2x_{\pi(0)},\\
&f+2x_{\pi(1)},\\
&f+2x_{\pi(0)}+2x_{\pi(1)},
\end{aligned}\right.
\left\{
\begin{aligned}
&f,\\
&f+2x_{\pi(0)}+x_{\pi(1)},\\
&f+2x_{\pi(1)},\\
&f+2x_{\pi(0)}+3x_{\pi(1)},
\end{aligned}\right.
\mbox{or}\
\left\{
\begin{aligned}
&f,\\
&f+3x_{\pi(0)}+x_{\pi(1)}+2x_{\pi(0)}x_{\pi(1)},\\
&f+2x_{\pi(0)}+2x_{\pi(1)},\\
&f+x_{\pi(0)}+3x_{\pi(1)}+2x_{\pi(0)}x_{\pi(1)}
\end{aligned}\right.
$$
form a complementary set of size $4$.

\end{example}

The results in Example \ref{exam-7} agree with the  Constructions 3 and 5 in \cite{CCA}. However, all constructions in \cite{CCA} are  based on a Brute-force method, where the computation is very heavy. From Example \ref{exam-7},  we can see that both the  $\delta$-quadratic terms and the constructed sequences can be explicit given based on the algebraic structure of  BH matrices.

\subsection{Constructions by the Trace Function and PPs over $\F_{p^n}$}

In this subsection, we assume $q=p$ prime and $N=p^n$. From Corollary 2, the set of the $\delta$-linear terms  is given by
 \begin{equation}\label{linear-3}
\delta_L(q=p, p^n)=\left\{ \sum_{k=0}^{m-1}\sum_{i=1}^{p^n-1}\left(c_{k,i}\cdot\prod_{j=0}^{n-1}x_{kn+j}^{i_j}\right)+c' \bigg| c_{k,i}, c'\in \F_p\right\}.
\end{equation}
where  $(i_0,i_1,\cdots,i_{n-1})$ is the $p$-ary expansion of integer $i$.

Let $H$ be a Hadamard matrix with entry $H_{u, v}=w^{Tr(u\cdot v)}$ for $u, v\in \F_{p^n}$. Then we have
$$Tr(g(y_0)g'(y_1))\in \delta_Q(p, p^n)$$ 
for arbitrary  PPs $g(\cdot), g'(\cdot)$  over $\F_{p^n}$. These terms are called  the  $\delta$-quadratic terms determined by Hadamard matrix over $\F_{p^n}$.

\begin{theorem}\label{thm-6}
Let $g_k(\cdot), g_k'(\cdot)$ are arbitrary  PPs over $\F_{p^n}$ for  $0\leq k\leq m$, $\pi$  arbitrary  permutation of symbols $\{0, 1, \cdots mn-1\}$, and $\forall l(\bm{x})\in \delta_L(p, p^n)$. For any $p$-ary function $f(\bm{x})$ from $\F_{p}^{mn}$ to $\F_p$ with the form
\begin{equation}\label{array-3}
f(\bm{x})=\sum_{k=1}^{m-1}Tr(g_{k}(y_{k-1})\cdot  g'_{k}(y_{k}))+l(\bm{x}),
\end{equation}
 we have the following results.
\begin{itemize}
	\item[(1)] The following $p$-ary functions form a CAS of size $p^n$:
\begin{equation}\label{CSS-3}
f_u(\bm{x})=f(\bm{x})+Tr(u\cdot g_{0}(y_{0})) \ \mbox{for} \ u\in \F_{p^n}.
\end{equation}
Consequently, the $p$-ary sequences  evaluated by functions $\{f_{u}(\pi\cdot \bm{x})\mid u\in \F_{p^n}\}$ form a CSS of size $p^n$.
   \item[(2)] The following $p$-ary functions form a CCA:
\begin{equation}\label{CCC-3}
f_{u,v}(\bm{x})=f(\bm{x})+Tr(u\cdot g_{0}(y_{0}))+Tr(v\cdot g'_{0}(y_{m-1})) \ \mbox{for} \ u, v\in \F_{p^n}.
\end{equation}
Consequently, the $p$-ary sequences  evaluated by functions $\{f_{u, v}(\pi\cdot \bm{x})\mid u, v\in \F_{p^n}\}$ form a CCC.
\end{itemize}
\end{theorem}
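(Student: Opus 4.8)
The plan is to recognize Theorem \ref{thm-6} as a direct instance of the general machinery assembled in Sections 2 and 3, exactly parallel to the proof of Theorem \ref{thm-5} but with $q=p$ (so that the factor $\frac{q}{2}$ disappears) and with $\F_{p^n}$ in place of $\F_{2^n}$. The central point is that the Hadamard matrix $H$ with $H_{u,v}=\omega^{Tr(u\cdot v)}$ lies in $H(p,p^n)$: its $u$th row is the additive character $w\mapsto\omega^{Tr(u\cdot w)}$ of $\F_{p^n}$ evaluated at each index, so $H\cdot H^{\dagger}=p^n\cdot\bm{I}_{p^n}$ follows from the orthogonality relations for additive characters, equivalently from $\sum_{u\in\F_{p^n}}\omega^{Tr(u\cdot w)}=0$ for $w\neq 0$ (each nonzero fiber of $u\mapsto Tr(uw)$ has size $p^{n-1}$ and $\sum_{a\in\F_p}\omega^a=0$). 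Hence its phase matrix has entries $\widetilde{H}_{u,v}=Tr(u\cdot v)$, a function from $\F_{p^n}^2$ to $\F_p=\Z_p$, where I identify $\Z_{p^n}$ with $\F_{p^n}$ through the map $\gamma$ fixed at the start of the section.

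First I would invoke Theorem \ref{theorem-q} with $h(u,v)=Tr(u\cdot v)$. Under the identification $\gamma$, every permutation polynomial over $\F_{p^n}$ is a permutation function over $\Z_{p^n}$, so the theorem yields $Tr(g(y_0)\cdot g'(y_1))\in\delta_Q(p,p^n)$ for all PPs $g,g'$. Taking one of the two permutations to be the identity map, which is itself a PP, the same theorem shows that $Tr(u\cdot g_0(y_0))$ and $Tr(g_0'(y_{m-1})\cdot v)$ are also $\delta$-quadratic terms. This is the only place where the algebraic structure of $H$ enters, and I regard it as the crux of the argument: once the trace-of-product terms are certified to lie in $\delta_Q(p,p^n)$, everything else is bookkeeping.

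Next I would observe that the function $f(\bm{x})$ in (\ref{array-3}) is precisely of the general form (\ref{func-rep}) guaranteed by Fact \ref{fact-2}, with $h_k(y_{k-1},y_k)=Tr(g_k(y_{k-1})\cdot g_k'(y_k))\in\delta_Q(p,p^n)$ for $1\le k\le m-1$ and $l(\bm{y})\in\delta_L(p,p^n)$. Setting $h_0(u,y_0)=Tr(u\cdot g_0(y_0))$ and $h_m(y_{m-1},v)=Tr(v\cdot g_0'(y_{m-1}))$, both $\delta$-quadratic by the previous step, Fact \ref{fact-3} shows directly that the arrays $f_u(\bm{x})$ of (\ref{CSS-3}) form a CAS of size $p^n$ and that the arrays $f_{u,v}(\bm{x})$ of (\ref{CCC-3}) form a CCA.

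Finally I would pass from arrays to sequences using Fact \ref{array-to-sequence}, applied to the $mn$-dimensional array underlying this CCA with the coordinate permutation $\pi$ of $\{0,1,\dots,mn-1\}$. Part (2) of that fact says the sequences evaluated by any single row form a CSS, which gives the CSS $\{f_u(\pi\cdot\bm{x})\mid u\in\F_{p^n}\}$ of item (1), and part (1) says the sequences evaluated by the whole matrix form a CCC, which gives $\{f_{u,v}(\pi\cdot\bm{x})\mid u,v\in\F_{p^n}\}$ of item (2). I expect no additional difficulty here, since the argument is structurally identical to Theorem \ref{thm-5}; the only care needed is to note that the relevant array dimension is $mn$ rather than $m$, so that $\pi$ ranges over the full index set $\{0,\dots,mn-1\}$.
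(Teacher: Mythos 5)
Your proposal is correct and follows exactly the route the paper intends: it certifies $H_{u,v}=\omega^{Tr(u\cdot v)}$ as a BH matrix, applies Theorem~\ref{theorem-q} to place $Tr(g(y_0)\cdot g'(y_1))$ (and, via the identity permutation, $Tr(u\cdot g_0(y_0))$) in $\delta_Q(p,p^n)$, and then invokes Facts~\ref{fact-2}, \ref{fact-3} and \ref{array-to-sequence} to obtain the CAS/CCA and the resulting CSS/CCC. The paper states Theorem~\ref{thm-6} without an explicit proof, relying on precisely this chain of reasoning, so your write-up simply makes the paper's implicit argument explicit.
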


We recall the semi-normalized PPs in definition \ref{semi} to avoid the duplication.
\begin{corollary}\label{coro-6}
For $\forall g_k(\cdot), g_k'(\cdot)\in \mathcal{S}^{(p^n)}$ ($0\leq k\leq m-1$), $d_k\in \F_{p^n}^*$,  $l(\bm{x})\in \delta_L(p, p^n)$ and permutation $\pi$, the $p$-ary sequence evaluated by functions $f(\pi \cdot \bm{x})$, where
\begin{equation}\label{seq-3}
f(\bm{x})=\sum_{k=1}^{m-1}Tr(d_k\cdot g_{k}(y_{k-1})\cdot  g'_{k}(y_{k}))+l(\bm{x}),
\end{equation}
lies in a CSS of size $p^n$.
\end{corollary}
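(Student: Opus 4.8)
The plan is to reduce Corollary~\ref{coro-6} directly to Theorem~\ref{thm-6} by absorbing each scalar $d_k$ into a permutation polynomial. The statement of the corollary merely asserts membership in a CSS; the restriction to semi-normalized PPs and the presence of the nonzero scalars $d_k$ are introduced only to control duplication among the resulting sequences (exactly as in Subsection~4.1), and play no role in establishing the complementary property itself.

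First I would record the elementary fact that scaling a PP by a nonzero field element preserves the permutation property: if $g(x)$ is a PP over $\F_{p^n}$ and $d\in\F_{p^n}^*$, then $x\mapsto d\cdot g(x)$ is the composition of the bijection $g$ with the bijection $x\mapsto dx$, hence is again a PP over $\F_{p^n}$. In particular, for each $k$ with $1\le k\le m-1$, the polynomial $\widehat{g}_k(x):=d_k\cdot g_k(x)$ is a PP over $\F_{p^n}$, and by $\F_p$-linearity of the trace, $Tr(d_k\cdot g_k(y_{k-1})\cdot g'_k(y_k))=Tr(\widehat{g}_k(y_{k-1})\cdot g'_k(y_k))$.

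Next I would observe that, with the PPs $\widehat{g}_k$ and $g'_k$ ($1\le k\le m-1$) in hand, the function $f(\bm{x})$ in (\ref{seq-3}) is precisely of the form (\ref{array-3}) required by Theorem~\ref{thm-6}. Since $g_0\in\mathcal{S}^{(p^n)}$ is itself a PP over $\F_{p^n}$, Theorem~\ref{thm-6}(1) applies verbatim with offsetting permutation polynomial $g_0$: setting $f_u(\bm{x})=f(\bm{x})+Tr(u\cdot g_0(y_0))$ for $u\in\F_{p^n}$, the $p$-ary sequences evaluated by $\{f_u(\pi\cdot\bm{x})\mid u\in\F_{p^n}\}$ form a CSS of size $p^n$. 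Taking $u=0$ gives $f_0(\pi\cdot\bm{x})=f(\pi\cdot\bm{x})$, exhibiting the sequence evaluated by $f(\pi\cdot\bm{x})$ as one member of this CSS, which is exactly the assertion of the corollary.

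Since every step is a direct appeal to Theorem~\ref{thm-6} after the one-line scaling observation, I do not anticipate a genuine obstacle. The only point deserving care is the conceptual separation between what the corollary proves (membership in \emph{some} CSS) and the purpose of the semi-normalization hypothesis (non-duplication of the enumerated sequences): the monic and $g(0)=0$ conditions of Definition~\ref{semi} are never invoked in the complementary-property argument, and indeed the same proof would go through for arbitrary PPs as in Theorem~\ref{thm-6} itself.
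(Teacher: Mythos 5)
Your proof is correct and takes essentially the same route as the paper: the paper offers no separate argument for this corollary, treating it as an immediate consequence of Theorem \ref{thm-6} once one notes (as in the discussion preceding Corollary \ref{coro-3}) that $d_k\cdot g_k$ is again a PP over $\F_{p^n}$ for $d_k\in\F_{p^n}^*$, with the semi-normalization of Definition \ref{semi} serving only to avoid duplication among the enumerated sequences. Your absorption of $d_k$ into $\widehat{g}_k$ and the choice $u=0$ to exhibit $f(\pi\cdot\bm{x})$ inside the CSS of Theorem \ref{thm-6}(1) is exactly the intended reduction (the appeal to trace linearity is superfluous, since the arguments of the two traces are literally equal, but this is harmless).
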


\begin{remark}
Theorem \ref{thm-4} is a  special cases of   Theorem \ref{thm-6}  by restricting $n=1$.
\end{remark}

\section{Constructions from Sequences with 2-Level Autocorrelation}

In this section, we assume $q=p$ prime and $N=p^n$.
Let $\bm{s}$ be a $p$-ary sequence of length $N-1=p^n-1$, given by
$$\bm{s}=(s(0), s(1),\cdots, s(p^n-2)).$$
The {\em periodic auto-correlation} of sequence $\bm{s}$ at shift $\tau$ ($0<\tau< N-1$) is defined by
$$\sum_{i=0}^{N-2}{\omega^{s(i+\tau)-s(i)}},$$
where $i+\tau$  is the summation over $\Z_{N-1}$. We say that $\bm{s}$ has  (ideal) 2-level  autocorrelation if its periodic auto-correlation always equals $-1$ for $0<\tau< N-1$.

\subsection{Trace Representation}

For any $p$-ary sequence $\bm{s}$ of length $p^n-1$, there exists a univariate polynomial function, say $h(y)$ from from $\F_{p^n}$ to $\F_p$, such that
$$s(i)=h(\alpha^i),$$
where $\alpha$ is a primitive element in $\F_{p^n}$. Such a polynomial function $h(y)$ can be represented by the sum of the monomial trace term $Tr(\beta_r y^r)$, where $Tr(y)$ is the trace function from $\F_{p^l}$ to $\F_{p}$ and $\beta_r \in \F_{p^l}$ for $l$ being the  coset size of $r$ and $l|n$, i.e., 
 $$h(y) = \sum_{r} Tr(\beta_r y^r),$$
 where $r$'s are coset leaders modulo $p^n-1$.
 
Since the sequence $\bm{s}$ has period $p^n-1$, there is at least one $r$ such that the coset containing $r$ has the full length $l=n$. Moreover,  if $h(y)$ has only one trace term, it is an $m$-sequence, which has been studied in the literature for more than seven decades. The reader is referring to \cite{Gong-book} for more details on the trace representation of sequences with period $p^n-1$.

It is obvious that $\bm{s}$ is a  2-level autocorrelation sequence if and only if its trace representation $h(y)$ satisfies
\begin{equation}\label{orth}
\sum_{y\in \F_{p^n}}\omega^{h(\lambda y)-h(y)}=0 \ \mbox{for}\ \forall \lambda\in \F_{p^n}^*.
\end{equation}

\subsection{ 2-Level Autocorrelation Sequences and BH matrices}

For a given 2-level autocorrelation sequence $\bm{s}$  of period $p^n-1$, we can construct a BH matrix of order $N=p^n$, say $\bm{H}=(H_{ij})$, as follows
$$
\begin{array}
	{l}
H_{i+1, j+1}=\omega^{s(i+j)},  0\le i, j<p^n-1,\\
H_{0,j}=H_{i,0}=1, 0\le i, j <p^n,
\end{array}
$$
where $i+j$  is the summation over $\Z_{N-1}$.

On the other hand,  the BH matrix $\bm{H}$ determined by the  2-level autocorrelation sequence $\bm{s}$,  in the sense of equivalence, can be represented by its  trace representation $h(y)$ with entry
$$H_{u,v}=\omega^{h(u\cdot v)} \ \mbox{for}\ u,v\in \F_{p^n}.$$

\begin{remark}
Note that $h(y)$ is the trace representation of sequence $\bm{s}$, we always have $h(0)=0$, which leads to $H_{0,v}=H_{u,0}=1$.
\end{remark}

\subsection{Known Constructions on 2-level Autocorrelation Sequences}

All the known construction on binary  2-level  autocorrelation sequences are collected in \cite{Gong-book} which remains the record until now. We provide  a summary for those constructions for both binary and nonbinary cases of length $p^n-1$ in the  following outlines.

Binary case ($p=2$):
\begin{enumerate}
\item[(1)] $m$-sequences (Golomb in 1954).
\item[(2)] For Mersenne prime $2^n-1$, quadratic residue sequences (1932).
For Mersenne prime $2^n-1 = 4a^2 + 27$, Hall's sextic residue sequences.
\item[(3)]  For $n \ge  6$, $n$ composite, GMW sequences (Goldon, Mills and Welch \cite{GMW62} in 1962, Scholtz and Welch \cite{SW84} in 1984).
\item[(4)]  Hyper-oval construction: Segre case and Glynn I and II cases (Maschietti \cite{Maschietti} in 1998).
\item[(5)] Dillon-Dobbertin's Kasami power function construction \cite{DD04} including conjectured 3-term and 5-term  sequences \cite{five} as subclasses, and also proved the case of the WG sequences \cite{WG98}.
\end{enumerate}

Nonbinary case:
\begin{enumerate}
\item[(1)] For $p>2$, $m$-sequences (Zieler, 1959), GMW sequences and HG sequences \cite{HG}.
	\item[(2)] For $p=3$,
	\begin{enumerate}
		\item Lin conjectured sequences (Hu, et al. \cite{HGH14} in 2014, Arasu et al. \cite{Arasu15} in  2015).
		\item Conjectured sequences by Ludkovski and Gong \cite{LG01} in 2000, some cases are proved in \cite{Arasu15}.
	\end{enumerate}
\end{enumerate}

For both binary and nonbinary cases, we also have  the subfield constructions: if $1<l<n$ and $l|n$,  $h_1(y)$ is a function from $\F_{p^l}$ to $F_{p}$ whose evaluation has 2-level autocorrelation of length $p^l-1$, and $h_2(y)$ is a GMW function from $\F_{p^n}$ to $\F_{p^l}$, then the composition of $h_1$ and $h_2$ produces a 2-level autocorrelation sequence of length $p^n-1$.

\subsection{$\delta$-Quadratic Terms from Sequences with 2-Level Autocorrelation}

Since a $p$-ary 2-level autocorrelation sequence $\bm{s}$ of period $p^n-1$ can be represented by a trace representation $h(y)$, which determine a BH matrix with entry $H_{u,v}=\omega^{h(u\cdot v)}$. According to Theorem \ref{theorem-q}, new $\delta$-quadratic terms are obtained.

\begin{corollary}
Let $h(y)$ be the trace representation of a $p$-ary sequence with 2-level autocorrelation of period $p^n-1$. Then we have
$$h(g(y_0)\cdot g'(y_1))\in \delta_Q(q=p, p^n),$$
where $g(\cdot), g'(\cdot)$ are arbitrary PPs over $\F_{p^n}$.
\end{corollary}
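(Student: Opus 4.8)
The plan is to recognize that this Corollary is simply the specialization of Theorem \ref{theorem-q} to the particular BH matrix attached to a 2-level autocorrelation sequence, so that essentially all the work has already been done and only the hypotheses of that theorem need to be matched. First I would invoke the construction of Subsection 6.2: a $p$-ary sequence $\bm{s}$ of period $p^n-1$ with 2-level autocorrelation and trace representation $h(y)$ determines, up to BH equivalence, a matrix $\bm{H}$ with entries $H_{u,v}=\omega^{h(u\cdot v)}$ for $u,v\in\F_{p^n}$, where the product $u\cdot v$ is taken in $\F_{p^n}$ and $h(0)=0$. Thus the map $H\colon\F_{p^n}^2\to\F_p$, $H(u,v)=h(u\cdot v)$, is a genuine two-variable function of exactly the type appearing in Theorem \ref{theorem-q}, and $\bm{H}$ is its associated phase matrix.

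Next I would identify $\F_{p^n}$ with $\Z_{p^n}$ through the map $\gamma$ fixed in Section 5, so that the permutation polynomials $g(\cdot),g'(\cdot)$ over $\F_{p^n}$ become permutation functions over $\Z_{p^n}$ and hence qualify as the permutations $g,g'$ required by Theorem \ref{theorem-q}. With these identifications the conclusion is immediate: Theorem \ref{theorem-q} yields $H(g(y_0),g'(y_1))\in\delta_Q(p,p^n)$, and since $H(g(y_0),g'(y_1))=h(g(y_0)\cdot g'(y_1))$ this is precisely the claimed membership.

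The only point needing genuine verification, and the step I expect to be the crux, is that $\bm{H}$ really is a Butson-type Hadamard matrix in $H(p,p^n)$, which is what licenses the appeal to Theorem \ref{theorem-q}. This is where the 2-level autocorrelation hypothesis enters. I would check the row orthogonality $\sum_{v\in\F_{p^n}}\omega^{h(u_1 v)-h(u_2 v)}$ by splitting into cases: when $u_1,u_2$ are both nonzero, the substitution $y=u_2 v$ with $\lambda=u_1/u_2\neq 1$ reduces the sum to condition (\ref{orth}), i.e.\ to the defining autocorrelation value $-1$ together with the $y=0$ term; and when one of $u_1,u_2$ is the zero element, orthogonality against the all-ones border row forces the balance identity $\sum_{w\in\F_{p^n}}\omega^{h(w)}=0$. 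This last identity is the delicate one, but it is already asserted in Subsection 6.2, so in the formal write-up I would simply cite that $\bm{H}\in H(p,p^n)$ and let the Corollary follow as a one-line application of Theorem \ref{theorem-q}.
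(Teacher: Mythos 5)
Your proposal is correct and takes essentially the same route as the paper: there the corollary is stated as an immediate specialization of Theorem~\ref{theorem-q} to the BH matrix $H_{u,v}=\omega^{h(u\cdot v)}$ attached to the sequence in Subsection~6.2, which is exactly your reduction, including the implicit identification of $\F_{p^n}$ with $\Z_{p^n}$. Your closing verification that this matrix really lies in $H(p,p^n)$ (row orthogonality from (\ref{orth}) for nonzero row indices, plus the balance identity $\sum_{w\in\F_{p^n}}\omega^{h(w)}=0$ against the all-ones border, which you rightly flag as the delicate point) is slightly more explicit than the paper, which simply asserts the BH property in Subsection~6.2, so citing that assertion---as you propose---is precisely the paper's own step.
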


\begin{corollary}
Let $h(y)$ be the trace representation of a binary sequence with 2-level autocorrelation of period $2^n-1$. Then we have
$$\frac{q}{2}h(g(y_0)\cdot g'(y_1))\in \delta_Q(q, 2^n),$$
where $g(\cdot), g'(\cdot)$ are arbitrary PPs over $\F_{2^n}$, and $q$ is even.
\end{corollary}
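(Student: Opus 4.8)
The plan is to reduce the statement to a direct application of Theorem \ref{theorem-q}, mirroring the argument for the preceding $p$-ary corollary but incorporating the even-$q$ embedding trick already used in Subsection 5.1. First I would recall from Subsection 6.2 that a binary $2$-level autocorrelation sequence $\bm{s}$ of period $2^n-1$, with trace representation $h(y)\colon \F_{2^n}\to \F_2$, determines a matrix $\bm{H}$ whose entries (in the sense of equivalence) are $H_{u,v}=(-1)^{h(u\cdot v)}$ for $u,v\in \F_{2^n}$, where $h(0)=0$ accounts for the all-one border row and column. The orthogonality $\bm{H}\bm{H}^{\dagger}=2^n\bm{I}_{2^n}$ is precisely the content of the $2$-level autocorrelation condition (\ref{orth}) specialized to $p=2$: writing an off-diagonal inner product as $\sum_{v}\omega^{h(u_1 v)-h(u_2 v)}$ and substituting $w=u_2 v$ turns it into $\sum_{w}\omega^{h(\lambda w)-h(w)}$ with $\lambda=u_1/u_2\neq 1$, which vanishes by (\ref{orth}). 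Hence $\bm{H}$ is a genuine Hadamard matrix of order $2^n$.

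Next I would reinterpret $\bm{H}$ as a member of $H(q,2^n)$. Since $q$ is even, $-1=\omega^{q/2}$ for $\omega=e^{2\pi i/q}$, so every entry $\pm 1$ is a $q$th root of unity and $H_{u,v}=\omega^{\frac{q}{2}h(u\cdot v)}$. The Hermitian orthogonality relation is unaffected by the change of root of unity, so $\bm{H}\in H(q,2^n)$ and its phase matrix $\widetilde{\bm{H}}$ has entries $\widetilde{H}_{u,v}=\frac{q}{2}h(u\cdot v)$, a function from $\F_{2^n}^2$ to $\Z_q$. Using the identification $\gamma$ of $\Z_{2^n}$ with $\F_{2^n}$ introduced in Section 5, any PP $g(\cdot)$ over $\F_{2^n}$ is a permutation function over $\Z_{2^n}$, so the hypotheses of Theorem \ref{theorem-q} are met with the bivariate phase function $\Phi(u,v)=\frac{q}{2}h(u\cdot v)$.

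Applying Theorem \ref{theorem-q} then yields $\Phi(g(y_0),g'(y_1))\in \delta_Q(q,2^n)$ for arbitrary PPs $g(\cdot),g'(\cdot)$ over $\F_{2^n}$, and since $\Phi(g(y_0),g'(y_1))=\frac{q}{2}h(g(y_0)\cdot g'(y_1))$, this is exactly the claim. The only delicate point, and the step I would verify most carefully, is the reinterpretation of the binary Hadamard matrix as an element of $H(q,2^n)$: one must confirm that passing from the second root of unity $-1$ to the $q$th root $\omega^{q/2}$ preserves both the requirement that all entries be $q$th roots of unity and the orthogonality, so that the phase function legitimately takes values in $\Z_q$ and Theorem \ref{theorem-q} applies verbatim. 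Everything else is formal substitution.
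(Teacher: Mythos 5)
Your main line of argument coincides with the paper's (implicit) one: the corollary is stated there without separate proof precisely because it is the formal consequence you describe of Theorem \ref{theorem-q}, once the matrix $H_{u,v}=(-1)^{h(u\cdot v)}$ of Subsection 6.2 is reread as an element of $H(q,2^n)$ with phase matrix $\frac{q}{2}h(u\cdot v)$ (legitimate, since $-1=\omega^{q/2}$ for even $q$ and the unitarity relation $\bm{H}\bm{H}^{\dagger}=2^n\bm{I}$ is a statement about the complex entries, independent of which root of unity they are regarded as), and once PPs over $\F_{2^n}$ are identified with permutation functions on $\Z_{2^n}$ via $\gamma$. The "delicate point" you single out (the change of root of unity) is indeed unproblematic.

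The genuine flaw is your claim that the Hadamard property of $\bm{H}$ "is precisely the content of" condition (\ref{orth}). Your substitution $w=u_2v$, $\lambda=u_1/u_2$ only treats pairs of rows with $u_1,u_2\neq 0$; the inner product of a row $u_1\neq 0$ with the all-ones row $u_2=0$ equals $\sum_{y\in\F_{2^n}}(-1)^{h(y)}$, and its vanishing is the \emph{balance} property of the sequence, which does not follow from (\ref{orth}). Concretely, take $h'$ with $h'(0)=0$ and $h'(y)=1+Tr(y)$ for $y\neq 0$: this is a legitimate trace representation (it vanishes at $0$, hence is a sum of trace monomials with no constant term), its evaluation is the complement of an $m$-sequence and therefore has 2-level autocorrelation, i.e.\ satisfies (\ref{orth}); yet $\sum_{y}(-1)^{h'(y)}=2$, so the bordered matrix built from $h'$ is not Hadamard. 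Thus Hadamardness requires 2-level autocorrelation \emph{together with} balance ($\sum_{y}\omega^{h(y)}=0$), a property that all the standard constructions listed in Subsection 6.3 enjoy and that the paper implicitly assumes through the correspondence with cyclic Hadamard difference sets. Since you also recall the BH property from Subsection 6.2 as given, this does not sink your proof of the corollary, but the orthogonality verification as written is incomplete and, read literally, asserts a false equivalence.
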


\begin{example}
For $m$ sequence, it trace representation is given by $h(y)=Tr(y)$. Then the entry of BH matrix $\bm{H}$ determined by $m$ sequence is given by $H_{u,v}=\omega^{Tr(u\cdot v)}$, which is the Hadamard matrix over $\F_{p^n}$ shown in Section 5. Thus, the results in Section 4.1 and Section 5 can be explained from the viewpoint of the $m$ sequences.
\end{example}

Other constructions of 2-level autocorrelation sequences yield new $\delta$-quadratic terms and new constructions of CSSs and CCCs. We give 3-term  sequences to illustrate it.

\begin{example}
For $p=2$, odd $n\geq 5$, $n=2n'+1$,  the binary 3-term sequence
$$s_i=Tr(\alpha^i)+Tr(\alpha^{q_1i})+Tr(\alpha^{q_2i})$$
has 2-level auto-correlation, where $q_1=2^{n'}+1$ and  $q_1=2^{n'}+2^{n'-1}+1$. Its trace representation is given by
$$h(y)=Tr(y+y^{q_1}+y^{q_2}).$$
Then the entry of BH matrix $\bm{H}$ determined by this three-term sequence is given by
$$H_{u,v}=(-1)^{Tr(uv+(uv)^{q_1}+(uv)^{q_2})}.$$
And we have
$$\frac{q}{2}Tr(g(y_0)\cdot g'(y_1)+g(y_0)^{q_1}\cdot g'(y_1)^{q_1}+g(y_0)^{q_2}\cdot g'(y_1)^{q_2})\in \delta_Q(q, 2^n),$$
where $g(\cdot), g'(\cdot)$ are arbitrary PPs over $\F_{2^n}$, and $q$ is even.

In particular, if we set $n=5$, we have that $\bm{H}$ with entry $H_{u,v}=(-1)^{Tr(uv+(uv)^{5}+(uv)^{7})}$ is a binary BH matrix of order 32, and 
$$\frac{q}{2}Tr(g(y_0)\cdot g'(y_1)+g(y_0)^{5}\cdot g'(y_1)^{5}+g(y_0)^{7}\cdot g'(y_1)^{7})\in \delta_Q(q, 2^5).$$
\end{example}

\section{Concluding Remarks}

The theory for $\delta$-quadratic functions in this paper is for arbitrary BH matrices, though we only discuss some special cases such as DFT matrices and BH matrices derived from 2-level autocorrelation sequences. On the other hand, even for the binary case, there are 5 inequivalent BH matrices of order 16, millions of inequivalent BH matrices of orders 32. The future study on these inequivalent BH matrices will produce new CSSs and CCCs, and can exponentially increase the number of sequences with low PMEPR.


\end{document}